\pdfoutput=1

\PassOptionsToPackage{most}{tcolorbox}
\newcommand{\tamulogofooter}{}
\documentclass[11pt,letterpaper]{mystyle}
\usepackage{fancyhdr}

\usepackage{tikz}
\usepackage[utf8]{inputenc} 
\usepackage[T1]{fontenc}
\usepackage[numbers]{natbib}
\usepackage{adjustbox}
\usepackage{breakurl}    
\usepackage{hyperref}
\usepackage[edges]{forest}
\usepackage{subcaption}
\usepackage{soul}
\usepackage{multirow}
\usepackage[utf8]{inputenc} 
\usepackage{booktabs}       
\usepackage{amsfonts}       
\usepackage{nicefrac}      
\usepackage{microtype}     
\usepackage{xcolor}        
\usepackage{colortbl}
\usepackage{enumitem}
\usepackage{amssymb}
\usepackage{wrapfig}
\usepackage{courier}
\usepackage{bxcoloremoji}
\usepackage{CJKutf8}
\usepackage{ragged2e}
\usepackage{longtable}
\usepackage[useregional]{datetime2}
\usepackage{threeparttable}
\usepackage{cleveref} 
\usepackage{wrapfig}
\usepackage{amsmath}
\usepackage{graphicx}  
\usepackage{amsmath}   
\usepackage{tcolorbox}
\tcbuselibrary{theorems}

\newtcbtheorem{proposition}{Proposition}{
    colback=gray!5,
    colframe=black!75,
    fonttitle=\bfseries
}{prop}

\crefname{appsec}{Appendix}{Appendices}
\Crefname{appsec}{Appendix}{Appendices}

\definecolor{tamuMaroon}{HTML}{500000}
\colorlet{tamuGrayMaroon}{tamuMaroon!15!black}
\newcommand{\github}{\raisebox{-1.5pt}{\includegraphics[height=1.05em]{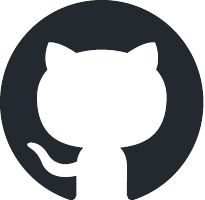}}}
\renewcommand{\today}{\DTMtoday}

\fancypagestyle{headstyle}{
    \fancyhead[C]{
    Equivariant Flow Matching for Electron Density Prediction
    }
}

\definecolor{hidden-red}{RGB}{205, 44, 36}
\definecolor{hidden-blue}{RGB}{194,232,247}
\definecolor{hidden-orange}{RGB}{243,202,120}
\definecolor{hidden-green}{RGB}{34,139,34}
\definecolor{hidden-pink}{RGB}{255,245,247}
\definecolor{hidden-black}{RGB}{20,68,106}
\definecolor{purple}{RGB}{144,153,196}
\definecolor{yellow}{RGB}{255,228,123}
\definecolor{hidden-yellow}{RGB}{255,248,203}
\definecolor{tkcolor}{RGB}{224,223,255}
\definecolor{darkblue}{rgb}{0, 0.40, 0.75}

\definecolor{tamuMaroon}{HTML}{500000}
\colorlet{abstractTextColor}{tamuMaroon!15!black}
\newcommand{\abstractstyle}{\color{abstractTextColor}}

\makeatletter
\let\oldabstract\abstract
\let\oldendabstract\endabstract
\renewcommand{\abstract}{\oldabstract\abstractstyle}
\renewcommand{\endabstract}{\oldendabstract}
\makeatother

\hypersetup{colorlinks=true, citecolor=darkblue, linkcolor=darkblue, urlcolor=darkblue}
\tcbset{
  aibox/.style={
    width=\linewidth,
    top=8pt,
    bottom=4pt,
    colback=blue!6!white,
    colframe=black,
    colbacktitle=black,
    enhanced,
    center,
    attach boxed title to top left={yshift=-0.1in,xshift=0.15in},
    boxed title style={boxrule=0pt,colframe=white,},
  }
}

\newtcolorbox{TakeawayBox}[2][]{takeawaybox,title=#2,#1}

\title{Equivariant Flow Matching for Electron Density Prediction}

\author{
  Chenxing Liang$^{1}$, 
  Chengdong Wang$^{2}$,
  Yuchao Lin$^{1}$, 
  \textbf{Xiaofeng Qian}$^{2,3,4*}$, 
  \textbf{Shuiwang Ji}$^{1,2,5*}$
\\

\vspace{1mm}

\normalfont{
$^1$ Department of Computer Science and Engineering, Texas A\&M University\vspace{-5pt} \\
$^2$ Department of Materials Science and Engineering, Texas A\&M University\vspace{-5pt} \\
$^3$ Department of Electrical and Computer Engineering, Texas A\&M University\vspace{-5pt} \\
$^4$ Department of Physics and Astronomy, Texas A\&M University\vspace{-5pt} \\
$^5$ Department of Mechanical Engineering, Texas A\&M University\vspace{-5pt} \\
}}

\begin{document}

\begin{abstract}

  \textbf{\large Abstract:}
  \vspace{1mm}

Machine learning surrogates for density functional theory (DFT) have been increasingly used to reduce the cost of first-principles calculations. In this arena, predicting real-space electron densities offers a scalable and transferable initialization for self-consistent field (SCF) procedures. However, current methods face a clear dilemma. That is, grid-based architectures incur a high computational cost, while basis-set methods fail to capture the structural correlations inherent in the coefficient space. Here, we develop \emph{OrbFlow}, an $\mathrm{SE}(3)$-equivariant generative model that predicts Gaussian-type orbital (GTO) coefficients via flow matching. OrbFlow retains the efficiency of a compact atom-centered basis while replacing pointwise regression with a learned probability path over the full coefficient space. It is trained through a two-phase trajectory curriculum that mitigates discretization drift during numerical integration. OrbFlow achieves state-of-the-art accuracy on QM9, reducing density error by 13.6\% relative to the previous best model, and reduces error by 51\% to 63\% on every molecule of the MD benchmark relative to the strongest prior method sharing its basis. The predicted density also cuts SCF iterations by up to 68\% with zero-shot transfer to unseen exchange-correlation functionals and recovers dipole and quadrupole moments to within a few percent of DFT references without any SCF calculation.

  \vspace{5mm}

  

  \vspace{1mm}
  \textbf{Keywords}: Equivariant Flow Matching, Electron Density Prediction, Generative Models, Quantum Chemistry
  
  \vspace{6mm}







    \textbf {* Equal senior authorship}

  \vspace{3mm}


  \github{} \textbf{Github Repository}: 
\url{https://github.com/divelab/AIRS/tree/main/OpenDFT/OrbFlow}

  \coloremojicode{1F917} \textbf{Model Checkpoints}: 
\url{https://huggingface.co/divelab/OrbFlow} 
  
  \coloremojicode{1F917} \textbf{Dataset}: 
\url{https://huggingface.co/datasets/divelab/OrbFlow-data}  
 


    \vspace*{0.2in}

\end{abstract}

\maketitle

\pagestyle{headstyle}
\thispagestyle{empty}

\newpage
\vspace{2em}
\tableofcontents

\newpage

\section{Introduction}
\label{sec:intro}

Density functional theory (DFT) is a cornerstone of computational quantum chemistry, materials science, and solid-state physics~\citep{hohenberg1964,kohn1965,jain2016,kirkpatrick2021,marzari2021}. It maps the many-body Schr\"odinger equation onto many one-electron Kohn-Sham equations, in which the total energy is a functional of electron density rather than many-electron wavefunctions, enabling \textit{ab initio} evaluation of molecular energies and electronic structure. The iterative self-consistent field (SCF) calculation that solves the Kohn-Sham equations scales cubically ($\mathcal{O}(N^3)$) with the number of electrons $N$, making it a bottleneck for large molecular structures, amorphous systems, and long-timescale molecular dynamics.

Recently, rapid advances have been made in developing deep learning surrogates to accelerate or bypass DFT, most commonly by predicting electronic structure operators directly from atomic coordinates~\citep{zhang2023artificial}, such as Kohn-Sham Hamiltonian matrix $\mathbf{H}$~\citep{tang2024deeph-hybrid,li2022deeph1,gong2023deephe3,yu2023qnet,yu2026qhnetv2,zhong2023hamgnn,gu2024deeptb,zhouyin2025deeptbe3,li2025wanet,luo2025sphnet,kim2025qhflow} or density matrix $\mathbf{P}$~\citep{shao2023machine,hazra2024predicting,febrer2025graph2mat} in localized basis sets. However, these matrix-centric predictions face systemic scalability and transferability bottlenecks. $\mathbf{H}$ is numerically difficult to predict owing to its large, dense entrywise structure~\citep{liu2026towards}. Models trained on small clusters therefore often fail to generalize to larger, unseen geometries, sometimes causing numerical divergence in the downstream DFT SCF calculations. In contrast, the real-space electron density is a physically smooth, localized, and universally transferable scalar field, and a high-quality initial guess for it can consistently reduce the number of SCF iterations under same convergence threshold.

Existing density prediction methods generally fall into two paradigms~\citep{fu2024scdp}. Probe-based methods evaluate density on dense 3D grids and offer strong expressiveness but incur heavy overhead from cubic spatial scaling. In contrast, basis-set methods are more efficient but rely on pointwise coefficient regression that fails to capture the structural correlations of the coefficient space. We propose to
address this dilemma by flow matching (FM)~\citep{lipman2023flowmatching,tong2024flowmatching2}, which parameterizes a time-dependent vector field that defines efficient ordinary differential equation (ODE) trajectories and yields rapid convergence while capturing intricate structural dependencies. 
FM has already achieved state-of-the-art accuracy in predicting quantum mechanical operators such as Kohn-Sham Hamiltonians~\citep{kim2025qhflow}, but its capabilities have not yet been applied to other continuous physical fields such as Gaussian-type orbital (GTO) basis spaces.


Motivated by the above considerations, we introduce \emph{OrbFlow}, a geometry-conditioned, $\mathrm{SE}(3)$-equivariant flow matching framework operating within a compact GTO coefficient space. Rather than a rigid, one-shot regression mapping, OrbFlow models the distribution of optimal basis spaces by learning a probability path that deforms a structured noise prior into the exact target distribution. To eliminate discretization drift during ODE integration rollouts, we propose a two-phase \emph{trajectory curriculum training strategy}. The model is first initialized with an independent flow matching objective to build a stable global vector field. It is then dynamically fine-tuned via an end-to-end differentiable multi-step Euler rollout loss that directly suppresses numerical truncation errors. Figure~\ref{fig:overview} presents an overview of OrbFlow.

\begin{figure}[t]
    \centering
    \includegraphics[trim={39pt 320pt 47pt 126pt}, clip, width=\textwidth]{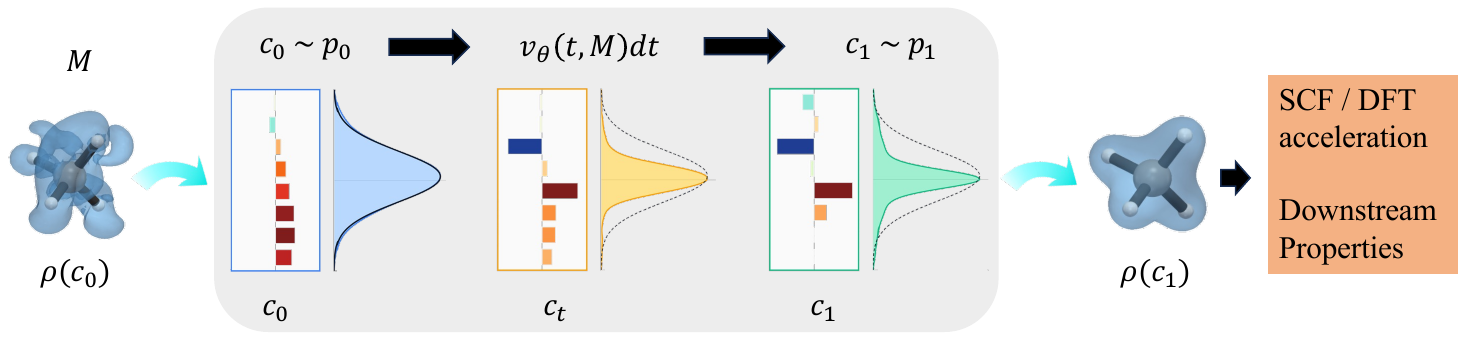}
    \caption{\textbf{Overview of OrbFlow.} Given a molecular graph $\mathcal{M}$, OrbFlow initializes Gaussian-type orbital (GTO) coefficients from a structured noise prior $\mathbf{c}_0 \sim p_0$ and integrates a learned, geometry-conditioned vector field $\mathbf{v}_\theta(t,M)$ over $t \in [0,1]$ to obtain the target coefficients $\mathbf{c}_1 \sim p_1$. The recovered coefficients are converted into a continuous real-space electron density $\rho(\mathbf{r})$ through the auxiliary GTO basis expansion, providing a fast surrogate for DFT initialization and enabling direct prediction of physical properties.}
    \label{fig:overview}
\end{figure}
\vspace{1pt}

Our primary contributions are summarized below.
\begin{itemize}[leftmargin=*]
    \item \textbf{Few-Step Flow Matching for Basis-Set Density Prediction.} We formulate GTO coefficient prediction as conditional flow matching, and show that its benefit is realized when the flow is trained under its own few-step rollout.
    \item \textbf{Trajectory Curriculum Training.} We introduce a two-phase curriculum that fine-tunes the flow under its own differentiable few-step rollout, requiring no teacher network and matching the accuracy of single-stage alternatives at $2.5\times$ less training compute.
    \item \textbf{Symmetry-Preserving Equivariant Integration.} We pair a high-order equivariant graph transformer with an irrep-isotropic prior, so that symmetry holds at every state along the rollout and the predicted density transforms equivariantly under rigid motions.
    \item \textbf{Accuracy and Downstream Utility.} OrbFlow achieves state-of-the-art accuracy on QM9 and reduces error on every MD molecule relative to the strongest prior model sharing its GTO basis, with SCF acceleration transferring zero-shot across unseen exchange-correlation functionals.
\end{itemize}

\section{Related Work}
\label{sec:related_work}

\textbf{Electron Density Prediction.} Continuous real-space electron density prediction generally falls into two paradigms~\citep{fu2024scdp}. Probe-based models evaluate density as spatial queries on dense 3D grids~\citep{jorgensen2022deepDFT,koker2024chargenet,gong2019probe1}, but cubic scaling with voxel resolution imposes severe memory and throughput costs at scale. Basis-set methods instead predict expansion coefficients of localized analytical functions, and split further by basis choice. Hybrid models combine a plane-wave basis with a localized GTO basis~\citep{kim2024gpwno}. Gaussian mixture models such as ELECTRA~\citep{elsborg2025electra} instead use a learnable set of Gaussian primitives rather than a fixed atom-centered basis. Auxiliary GTO regression predicts coefficients directly against a fixed atom-centered basis~\citep{cheng2023infgcn,fu2024scdp,qiao2022basis,del2023basis2}.

\textbf{Flow Matching and Exposure Bias.} Flow Matching (FM)~\citep{lipman2023flowmatching,tong2024flowmatching2} has emerged as a powerful alternative to diffusion models for molecular generative tasks~\citep{jing2024alphafold,miller2024flowmm,kim2025mofflow,stark2024docking}, most relevantly in QHFlow~\citep{kim2025qhflow}, which frames Hamiltonian matrix regression as an equivariant distribution-matching problem. We extend this paradigm to GTO coefficient spaces, pairing a flow over compact GTO coefficients with a trajectory-regularized multi-step Euler objective. The mismatch between single-step training and the compounding error of multi-step inference is known as exposure bias~\citep{bengio2015scheduled,ranzato2015sequence}. Diffusion and flow-based models remove it in two ways. Distillation trains a student to reproduce the multi-step behavior of a separately trained, frozen teacher, as in progressive distillation~\citep{salimans2022progressive} and consistency models~\citep{song2023consistency}. Shortcut models~\citep{frans2025one} drop the teacher instead, conditioning the network on the step size so that one model stays accurate at many step budgets. Our curriculum takes a third route and needs neither. It continues training the same network under its own $K$-step rollout, with identical weights across both phases, so it adds no model capacity, no teacher, and no extra conditioning input.

\section{Methods}
\label{sec:methods}

\subsection{Charge Density Representation via Gaussian-Type Orbitals}
\label{sec:representation}

To represent the continuous molecular electron density $\rho(\mathbf{r})$ compactly, we adopt an atom-centered Gaussian-type orbital (GTO) basis expansion~\citep{eichkorn1995auxiliary}, following \citet{fu2024scdp}. Each basis function $\varphi_{\alpha,\,\ell,\,m,\,\mathbf{R}_i}(\mathbf{r})$ is a normalized spherical Gaussian centered at node coordinate $\mathbf{R}_i$, characterized by radial exponent $\alpha$, angular momentum $\ell$, and magnetic quantum number $m$. The full functional form and the even-tempered scheme used to construct the radial exponents are given in Appendix~\ref{app:even-temper}. Given a molecular geometry $\{\mathbf{R}_i\}_{i=1}^{N}$ containing both nuclear coordinates and bond-midpoint virtual nodes, we parameterize the spatial density field as a linear combination of these localized GTOs,
\begin{equation}
  \rho(\mathbf{r})
  =
  \sum_{i=1}^{N}
  \sum_{j=1}^{N_i^{\mathrm{b}}}
  \sum_{m=-\ell_{i,j}}^{\ell_{i,j}}
  c_{i,j,m}\,
  \varphi_{\alpha_{i,j},\,\ell_{i,j},\,m,\,\mathbf{R}_i}(\mathbf{r}),
  \label{eq:rho-lcao}
\end{equation}
where $N_i^{\mathrm{b}}$ denotes the number of unique radial shells assigned to node $i$. Following the basis enrichment recipe of \citet{fu2024scdp}, we construct our basis by starting from uncontracted \texttt{def2-QZVPPD} elemental configurations~\citep{rappoport2010property}, embedding $\mathrm{SE}(3)$-equivariant virtual nodes at bond midpoints to capture interatomic shared densities, and applying even-tempered geometric shell augmentations~\citep{bardo1974even} for radial flexibility. Element-specific hyperparameters are detailed in Appendix~\ref{app:element-para}, and the representation floor of this basis, the error of the best fit it can express, is reported in Appendix~\ref{app:basis_floor}.

\subsection{Equivariant Flow Matching}
\label{sec:flow}

\textbf{Problem Formulation.} Our goal is to predict the target GTO coefficient
vector $\mathbf{c}^{\star} \in \mathbb{R}^{d}$ from a molecular graph
$\mathcal{G} = (\mathbf{R}, \mathbf{H})$, where $\mathbf{R} \in \mathbb{R}^{N \times 3}$
and $\mathbf{H} \in \mathbb{R}^{N \times D}$ denote the node coordinates and features. The density is reconstructed linearly from these coefficients on a discrete grid as
\begin{equation}
  \boldsymbol{\rho}(\mathbf{c}) = \mathcal{O}(\mathbf{c}) = s\,\boldsymbol{\Phi}\,\mathbf{c},
  \label{eq:synth}
\end{equation}
where $\Phi_{pj} = \varphi_j(\mathbf{r}_p)$ evaluates the $j$-th GTO function at grid point $p$, $s > 0$ is a scale factor, and $\mathcal{O}(\cdot)$ is our differentiable density synthesis operator. We train on $\mathcal{D} = \{(\mathcal{G}, \mathbf{c}^{\star})\}$, where target coefficients $\mathbf{c}^{\star}$ are obtained by projecting ground-truth DFT densities onto the basis via a regularized least-squares fit (Appendix~\ref{app:ridge}). Instead of mapping $\mathcal{G} \mapsto \mathbf{c}^{\star}$ in a single step, our model learns a transport from a tractable prior $\mathbf{c}_0 \sim p_0(\cdot \mid \mathcal{G})$ to the target $\mathbf{c}^{\star}$.

\textbf{Conditional Flow Matching in GTO Space.}
We formulate a continuous normalizing flow (CNF) within this compact, auxiliary GTO coefficient space: a time-dependent, geometry-conditioned vector field $\mathbf{v}_\theta(\mathbf{x}_t, t)$ defining an ODE $\mathrm{d}\mathbf{x}_t/\mathrm{d}t = \mathbf{v}_\theta(\mathbf{x}_t, t)$ over $t \in [0, 1]$ that deforms the prior $p_0(\mathbf{c}_0)$ at $t=0$ into a distribution on $\mathbf{c}^{\star}$ at $t=1$. Following \citet{kim2025qhflow}, we use a linear probability bridge,
\begin{equation}
  \mathbf{x}_t = (1-t)\,\mathbf{c}_0 + t\,\mathbf{c}^{\star},
  \label{eq:bridge}
\end{equation}
with conditional field $\mathbf{u}_t(\mathbf{x}_t \mid \mathbf{c}^{\star}, \mathbf{c}_0) = \mathbf{c}^{\star} - \mathbf{c}_0$. We parameterize $\mathbf{v}_\theta$ via an endpoint scheme,
\begin{equation}
  \mathbf{v}_\theta(\mathbf{x}_t, t) = \frac{f_\theta(\mathcal{G}, \mathbf{x}_t, t) - \mathbf{x}_t}{1-t},
  \label{eq:v-from-endpoint}
\end{equation}
where $f_\theta(\mathcal{G}, \mathbf{x}_t, t)$ predicts the clean target endpoint $\hat{\mathbf{c}}^{\star} \approx \mathbf{c}^{\star}$. Since our physical objective is real-space density accuracy rather than raw coefficient error, we supervise the predicted endpoint through the synthesis operator $\mathcal{O}(\cdot)$, evaluating the MAE over $P'$ subsampled spatial probes:
\begin{equation}
  \mathcal{L}_{\mathrm{stage1}}(\theta) = \mathbb{E}_{t, \,\mathcal{G},\,\mathbf{c}_0} \Bigl[ \tfrac{1}{P'} \sum_{p=1}^{P'} \bigl| \bigl[\mathcal{O}(f_\theta(\mathcal{G}, \mathbf{x}_t, t))\bigr]_{p} - \boldsymbol{\rho}^{\mathrm{lab}}_{p} \bigr| \Bigr],
  \label{eq:Lep}
\end{equation}

\textbf{Two-Phase Trajectory Curriculum Learning.} Training solely on isolated, random timesteps via~\eqref{eq:Lep} does not align the vector field with the multi-step trajectories used at inference. We address this with a two-phase curriculum. Phase 1 first establishes a good vector field at arbitrary, isolated timesteps. Phase 2 then fine-tunes that field under the same multi-step rollout used at inference, so training conditions match evaluation conditions directly.

\textit{Phase 1 (Field Initialization).} We minimize $\mathcal{L}_{\mathrm{stage1}}$ for $T_1$ steps, teaching the network to map arbitrary perturbed states $(\mathcal{G},\mathbf{x}_t,t)$ to a target endpoint whose synthesized density matches the true DFT field. This establishes a stable continuous normalizing flow without the overhead of unrolling the ODE, giving a stable warm-start vector field $\mathbf{v}_\theta$.

\textit{Phase 2 (Trajectory Regularization).} Starting from the Phase-1 weights, we continue training for $T_2$ steps under a full $K$-step Euler rollout that mirrors how the model is actually used at inference. Each rollout begins at the pure noise prior $\mathbf{x}_{t_0} = \mathbf{c}_0$ at $t_0 = 0$, rather than at an interpolated point along the bridge, and is unrolled end to end through the network in a differentiable manner. We discretize the interval $[0,1]$ into $K$ equal steps of size $\Delta t = 1/K$, with $t_k = k \cdot \Delta t$, and advance the state at each step with the Euler update
\begin{equation}
    \mathbf{x}_{t_{k+1}} = \mathbf{x}_{t_k} + \Delta t \cdot \mathbf{v}_\theta(\mathbf{x}_{t_k}, t_k), \quad \text{for } k = 0, \dots, K-1.
    \label{eq:generalized_euler}
\end{equation}
The final state $\hat{\mathbf{c}}_1 = \mathbf{x}_{t_K}$ is the model's simulated prediction of the target coefficients, using $K=2$ in our primary experiments. We train the model by comparing the density reconstructed from this simulated endpoint to the true DFT density, giving the Phase-2 objective
\begin{equation}
  \mathcal{L}_{\mathrm{stage2}}(\theta) = \mathbb{E}_{\mathcal{G},\,\mathbf{c}_0} \Bigl[ \tfrac{1}{P'} \sum_{p=1}^{P'} \bigl| \bigl[\mathcal{O}(\hat{\mathbf{c}}_1)\bigr]_{p} - \boldsymbol{\rho}^{\mathrm{lab}}_{p} \bigr| \Bigr].
  \label{eq:Lint}
\end{equation}

In Phase 1, the input state $\mathbf{x}_t$ is built by interpolating with the ground-truth target itself via~\eqref{eq:bridge}, so the network is partially shown the answer during training. In Phase 2, the rollout starts from pure noise with no such guidance, so the model must navigate the full trajectory on its own, and any small per-step error can accumulate into discretization drift. Phase 2 directly penalizes this accumulated drift. Penalizing this drift acts as a trajectory regularizer that straightens the flow field, which is why Phase 2 improves performance at the trained rollout horizon even though it starts from Phase-1 weights that already minimize $\mathcal{L}_{\mathrm{stage1}}$. Phase 2 achieves this by continuing to train the same network under its own rollout, rather than by introducing a separate teacher network or a distillation step.

\textbf{Equivariant Flow and Isotropic Irrep Priors.} For the generated density to be invariant under rigid translations and rotations across both phases, the flow must be strictly $\mathrm{SE}(3)$-equivariant. Atomic coordinates are centered at the origin ($\mathbf{r}_i \leftarrow \mathbf{r}_i - \tfrac{1}{N}\sum_j \mathbf{r}_j$), removing translational dependence. Under a rotation $R \in \mathrm{SO}(3)$, GTO basis functions (built from spherical harmonics $Y_\ell^m$) of angular momentum $\ell$ transform via the $(2\ell+1)$-dimensional Wigner D-matrix $D^{(\ell)}(R)$, so the coefficient vector $\mathbf{c}$ transforms covariantly as a direct sum of irreducible representations,
\begin{equation}
  \mathbf{c} \;\xrightarrow{\;R\;}\; \mathbf{D}(R)\mathbf{c} \;=\; \bigoplus_{\ell} D^{(\ell)}(R) \mathbf{c}^{(\ell)}.
\end{equation}
We therefore require both the network $f_\theta$ and the prior $p_0(\mathbf{c}_0)$ to respect this symmetry. We parameterize $f_\theta$ with a high-order equivariant graph transformer (Sec.~\ref{sec:impl}) satisfying
\begin{equation}
  f_\theta\bigl(R \cdot \mathcal{G}, \, \mathbf{D}(R)\mathbf{x}_t, \, t\bigr) = \mathbf{D}(R) f_\theta(\mathcal{G}, \mathbf{x}_t, t),
\end{equation}
and require $p_0(\mathbf{c}_0) = p_0(\mathbf{D}(R)\mathbf{c}_0)$ for all $R \in \mathrm{SO}(3)$, satisfied by an \emph{irrep-isotropic Gaussian prior} in which each angular momentum subspace $\ell$ is drawn independently from a zero-mean isotropic Gaussian, $\mathbf{c}_0^{(\ell)} \sim \mathcal{N}(\mathbf{0}, \sigma_\ell^2 \mathbf{I}_{2\ell+1})$, with $\sigma_\ell^2$ matched to the empirical variance of the target coefficients in subspace $\ell$.

\begin{proposition}{}
LLet $\mathbf{c}_0 \sim p_0(\mathbf{c}_0)$ be sampled from the irrep-isotropic prior. Under any rotation $R \in \mathrm{SO}(3)$, the prior density is invariant, $p_0(\mathbf{c}_0) = p_0(\mathbf{D}(R)\mathbf{c}_0)$. The proof is in Appendix~\ref{app:proof}.
\end{proposition}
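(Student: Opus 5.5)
The plan is to write the prior density explicitly as a product over irrep blocks and show that each factor depends on its block only through the Euclidean norm, which Wigner D-matrices preserve. By construction, the coefficient vector splits into blocks $\mathbf{c}_0 = \bigoplus_{b} \mathbf{c}_0^{(b)}$. Each block $b$ corresponds to one radial shell on one node and carries angular momentum $\ell_b$, so $\mathbf{c}_0^{(b)} \in \mathbb{R}^{2\ell_b+1}$. The blocks are drawn independently, so
\begin{equation}
  p_0(\mathbf{c}_0) = \prod_b (2\pi\sigma_{\ell_b}^2)^{-(2\ell_b+1)/2} \exp\!\Bigl(-\tfrac{\|\mathbf{c}_0^{(b)}\|^2}{2\sigma_{\ell_b}^2}\Bigr).
\end{equation}
The rotation acts block-diagonally, $\mathbf{D}(R)\mathbf{c}_0 = \bigoplus_b D^{(\ell_b)}(R)\,\mathbf{c}_0^{(b)}$. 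This holds because the variance $\sigma_\ell^2$ and the transformation both depend only on $\ell$, so several shells sharing the same $\ell$ cause no issue.

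The key step is orthogonality of the Wigner matrices. In the real spherical-harmonic basis used for GTOs, $D^{(\ell)}(R)$ is a real orthogonal matrix. It is a real representation of the compact group $\mathrm{SO}(3)$, and the real $Y_\ell^m$ form an orthonormal basis of $L^2(S^2)$ in which rotation acts as a unitary operator. Hence $\|D^{(\ell_b)}(R)\mathbf{c}_0^{(b)}\| = \|\mathbf{c}_0^{(b)}\|$ for every block. Substituting into the product formula, each exponential factor is unchanged, and the normalizing constants do not depend on $\mathbf{c}_0$. Therefore $p_0(\mathbf{D}(R)\mathbf{c}_0) = p_0(\mathbf{c}_0)$.

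As a remark I would add that $|\det \mathbf{D}(R)| = 1$. So the pushforward of $p_0$ under $\mathbf{D}(R)$ is also $p_0$, meaning that $\mathbf{D}(R)\mathbf{c}_0$ has the same distribution as $\mathbf{c}_0$. This is the form actually needed later for equivariance of the rollout.

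The only step needing care is justifying that $D^{(\ell)}(R)$ is orthogonal in the paper's basis convention. If the basis were complex spherical harmonics, I would instead use unitarity together with the complex-Gaussian form $\exp(-\|\cdot\|^2/2\sigma^2)$. If the basis functions within a shell were normalized but not mutually orthonormal over $m$, I would check that the angular parts are orthonormal on the sphere; this is the standard real solid-harmonic convention. Everything else is routine.
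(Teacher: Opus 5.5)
Your proposal is correct and follows essentially the paper's proof. Both arguments rest on $D^{(\ell)}(R)$ being orthogonal (unitary), so each block's squared norm and hence each Gaussian exponent is unchanged, while the normalizing constants do not depend on $\mathbf{c}_0$; your remark that $|\det \mathbf{D}(R)| = 1$, so that $\mathbf{D}(R)\mathbf{c}_0$ is itself distributed as $p_0$, is a worthwhile addition the paper leaves implicit.
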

Pairing this rotationally invariant prior with our equivariant network $f_\theta$ guarantees that integrating the velocity field yields a physically consistent, symmetry-preserving density.
\subsection{Model Implementation}
\label{sec:impl}

We parameterize $f_\theta$ using a stack of $L$ EquiformerV3 blocks~\citep{liao2026equiformerv3}, leveraging $\mathrm{SE}(3)$-equivariant multi-head attention and local $\mathrm{SO}(2)$ convolutions~\citep{passaro2023reducing} to jointly process the molecular geometry $\mathcal{G}$ and the current flow state $\mathbf{x}_t$. The backbone readout is mapped to per-atom GTO coefficients via a bilinear \texttt{e3nn} tensor-product readout~\citep{geiger2022e3nn}, serving directly as the flow head $f_\theta$ in~\eqref{eq:v-from-endpoint}. Full architectural details are in Appendix~\ref{app:training_details}.

\section{Experiments}
\label{sec:experiments}

We evaluate OrbFlow on the QM9 and MD benchmarks, building on the GTO auxiliary basis formulation of SCDP~\citep{fu2024scdp} and a state-of-the-art EquiformerV3~\citep{liao2026equiformerv3} backbone.  We evaluate predicted real-space densities $\boldsymbol{\rho}^{\mathrm{pred}} = \mathcal{O}(\hat{\mathbf{c}}_1)$ against ground-truth DFT references $\boldsymbol{\rho}^{\mathrm{lab}}$ on the full spatial grid. All QM9 and MD results are averaged over three independently trained random seeds. We report global reconstruction accuracy via the normalized Mean Absolute Error, $\mathrm{NMAE} = 100 \times \frac{\|\boldsymbol{\rho}^{\mathrm{pred}} - \boldsymbol{\rho}^{\mathrm{lab}}\|_{1}}{\|\boldsymbol{\rho}^{\mathrm{lab}}\|_{1}}$.

\subsection{QM9 Dataset}
\label{sec:exp-qm9}

\textbf{Dataset.} QM9~\citep{ruddigkeit2012qm91,ramakrishnan2014qm92} contains $\sim$134,000 stable organic molecules with up to nine heavy atoms ($\mathrm{C, O, N, F}$) plus $\mathrm{H}$. We use the real-space density files curated by \citet{jorgensen2022deepDFT} (VASP, PBE functional, PAW method), adopting the standard 123,835/50/10,000 train/val/test partition. To compare directly with ChargE3Net~\citep{koker2024chargenet}, InfGCN~\citep{cheng2023infgcn}, and GPWNO~\citep{kim2024gpwno}, we also evaluate on ``Split 2,'' the last 1,600 structures of the standard test set, matching these baselines' protocol.

\textbf{Results.} OrbFlow achieves state-of-the-art accuracy on both partitions (Table~\ref{tab:qm9-categorized}). It reaches 0.152\% NMAE on the primary test set, a 13.6\% reduction over the strongest baseline ELECTRA (0.176\%) and 14.6\% over its closest GTO-based predecessor SCDP + VO (0.178\%), and 0.165\% on Split 2, 13.6\% below SCDP + VO and 18.7\% below ChargE3Net (0.203\%). These margins reflect both the flow matching formulation and the EquiformerV3 backbone. Sec.~\ref{sec:ablations} separates the two at fixed architecture and compute, and finds that the flow matching formulation contributes on its own. They are also best read against the representation floor of GTO basis, since a direct least-squares fit of the ground-truth densities reaches 0.061\% NMAE (Appendix~\ref{app:basis_floor}) and OrbFlow removes 22\% of the error SCDP + VO leaves above it. Figure~\ref{fig:qualitative} visualizes predicted isosurfaces and spatial error.

\begin{table}[t]
\centering
\caption{QM9 charge density prediction error, grouped by model representation. The first column reports NMAE [\%] on the standard 10,000-molecule test set, and the second on ``Split 2,'' the last 1,600 structures of that test set. Results for OrbFlow and ELECTRA were obtained in this work as means over three training seeds, all other entries are quoted from the cited papers. Lower NMAE is better ($\downarrow$). \textbf{Bold} marks the best result per column, \underline{underline} the second-best.}
\label{tab:qm9-categorized}
\small
\renewcommand{\arraystretch}{1.15}
\begin{tabular}{@{}l cc@{}}
\toprule
\textbf{Model} & \textbf{NMAE [\%] $\downarrow$} & \textbf{NMAE, Split 2 [\%] $\downarrow$} \\
\midrule
\multicolumn{3}{@{}l}{\textit{Probe-based}} \\
\cmidrule(lr){1-3}
\hspace{0.6em}i-DeepDFT~\citep{jorgensen2022deepDFT} & $0.357 \pm 0.001$ & -- \\
\hspace{0.6em}e-DeepDFT~\citep{jorgensen2022deepDFT} & $0.284 \pm 0.001$ & -- \\
\hspace{0.6em}ChargE3Net~\citep{koker2024chargenet} & $0.196 \pm 0.001$ & $0.203 \pm 0.003$ \\
\cmidrule(lr){1-3}
\multicolumn{3}{@{}l}{\textit{Hybrid PW \& GTO}} \\
\cmidrule(lr){1-3}
\hspace{0.6em}GPWNO~\citep{kim2024gpwno} & -- & 0.73 \\
\cmidrule(lr){1-3}
\multicolumn{3}{@{}l}{\textit{Gaussian Mixture}} \\
\cmidrule(lr){1-3}
\hspace{0.6em}ELECTRA~\citep{elsborg2025electra} & \underline{$0.176 \pm 0.001$} & -- \\
\cmidrule(lr){1-3}
\multicolumn{3}{@{}l}{\textit{GTO-based}} \\
\cmidrule(lr){1-3}
\hspace{0.6em}InfGCN~\citep{cheng2023infgcn} & $0.869 \pm 0.002$ & 0.93 \\
\hspace{0.6em}SCDP, $L=3$~\citep{fu2024scdp} & $0.432 \pm 0.001$ & $0.438 \pm 0.003$ \\
\hspace{0.6em}SCDP + VO, $L=6$~\citep{fu2024scdp} & $0.178 \pm 0.001$ & \underline{$0.191 \pm 0.002$} \\
\hspace{0.6em}\textbf{OrbFlow (ours)} & $\mathbf{0.152 \pm 0.001}$ & $\mathbf{0.165 \pm 0.001}$ \\
\bottomrule
\end{tabular}
\end{table}

\textbf{Training Details.} We train with Adam under a linear warm-up followed by a power-decayed schedule, following the two-phase curriculum of Sec.~\ref{sec:flow} with rollout budget $K=2$. Full hyperparameters are in Appendix~\ref{app:dataset-settings}.

\subsection{MD Dataset}
\label{sec:exp-md}

\textbf{Dataset.} We benchmark on molecular dynamics (MD) trajectories spanning six molecules, ethanol, benzene, phenol, resorcinol, ethane, and malonaldehyde (MDA)~\citep{brockherde2017MD1,bogojeski2020MD2}, with reference densities computed at the PBE level via plane-wave Quantum ESPRESSO calculations, sampled across diverse geometries on uniform 3D grids.


\textbf{Results.} Table~\ref{table:md-test-performance} reports NMAE across the six MD trajectories. Compared with SCDP, the strongest prior GTO-based method, OrbFlow reduces test error on every molecule by 51\% to 63\% in relative terms, with the largest gains on benzene (1.13\% to 0.421\%) and ethane (2.05\% to 0.871\%). Averaged over the six molecules, NMAE falls from 1.81\% to 0.798\%, a $2.3\times$ reduction at comparable model size. On benzene and ethane OrbFlow comes within 0.02 and 0.06 percentage points of the basis fitting floor (Appendix~\ref{app:basis_floor}), leaving little room for further gains in this basis. Against ELECTRA, a 723M-parameter Gaussian-mixture model with roughly $55\times$ more parameters, OrbFlow remains competitive. OrbFlow is more accurate on benzene and ethane and statistically indistinguishable on phenol (0.575\% against 0.573\%, well within seed variance), while ELECTRA leads on ethanol, resorcinol, and malonaldehyde, by a small margin on the first two and a clear one on malonaldehyde. Therefore, OrbFlow approaches the accuracy of a much larger Gaussian-mixture model while improving substantially over every prior method operating in the same GTO basis, and it outperforms the hybrid plane-wave method GPWNO on all six molecules by wide margins. Figure~\ref{fig:qualitative} visualizes predicted density isosurfaces and spatial error for both benchmarks.

\begin{table}[t]
\centering
\caption{NMAE [\%] on MD test sets for OrbFlow, SCDP~\citep{fu2024scdp},
InfGCN~\citep{cheng2023infgcn}, ELECTRA~\citep{elsborg2025electra}, and GPWNO~\citep{kim2024gpwno},
grouped by model representation. \textbf{Bold} marks the best result per row,
\underline{underline} marks the second-best. Results for OrbFlow and ELECTRA were obtained in this work and are reported as mean $\pm$ std over three training seeds.
SCDP, InfGCN and GPWNO numbers are taken from~\citet{fu2024scdp}. ELECTRA's ethanol standard deviation is below $5\times10^{-4}$ and rounds to $0.000$.}
\label{table:md-test-performance}
\small
\setlength{\tabcolsep}{4pt}
\renewcommand{\arraystretch}{1.1}
\begin{tabular}{@{}l ccc c c@{}}
\toprule
\multirow{2}{*}{\textbf{Molecule}} & \multicolumn{3}{c}{\textit{GTO-based}} & \textit{Gaussian Mixture} & \textit{Hybrid PW \& GTO} \\
\cmidrule(lr){2-4} \cmidrule(lr){5-5} \cmidrule(lr){6-6}
& \textbf{OrbFlow (ours)} & \textbf{SCDP} & \textbf{InfGCN} & \textbf{ELECTRA} & \textbf{GPWNO} \\
\midrule
\# Params & 13M & 16M & -- & 723M & -- \\
\midrule
MD-ethanol       & \underline{$1.078 \pm 0.023$} & $2.34 \pm 0.25$ & 8.43  & $\mathbf{1.030 \pm 0.000}$ & 4.00 \\
MD-benzene       & $\mathbf{0.421 \pm 0.002}$ & $1.13 \pm 0.06$ & 5.11  & \underline{$0.472 \pm 0.013$} & 2.45 \\
MD-phenol        & $\mathbf{0.575 \pm 0.008}$ & \underline{$1.29 \pm 0.07$} & 5.51  & $\mathbf{0.573 \pm 0.012}$ & 2.68 \\
MD-resorcinol    & \underline{$0.663 \pm 0.014$} & $1.35 \pm 0.08$ & 5.95  & $\mathbf{0.619 \pm 0.005}$ & 2.73 \\
MD-ethane        & $\mathbf{0.871 \pm 0.008}$ & $2.05 \pm 0.12$ & 7.01  & \underline{$0.970 \pm 0.055$} & 3.67 \\
MD-malonaldehyde & \underline{$1.177 \pm 0.012$} & $2.71 \pm 0.60$ & 10.34 & $\mathbf{0.826 \pm 0.025}$ & 5.32 \\
\bottomrule
\end{tabular}
\end{table}

\textbf{Training Details.} Full details are in Appendix~\ref{app:dataset-settings}.


\begin{figure}[t]
\centering
\includegraphics[trim={75pt 117pt 119pt 116pt}, clip, width=\linewidth]{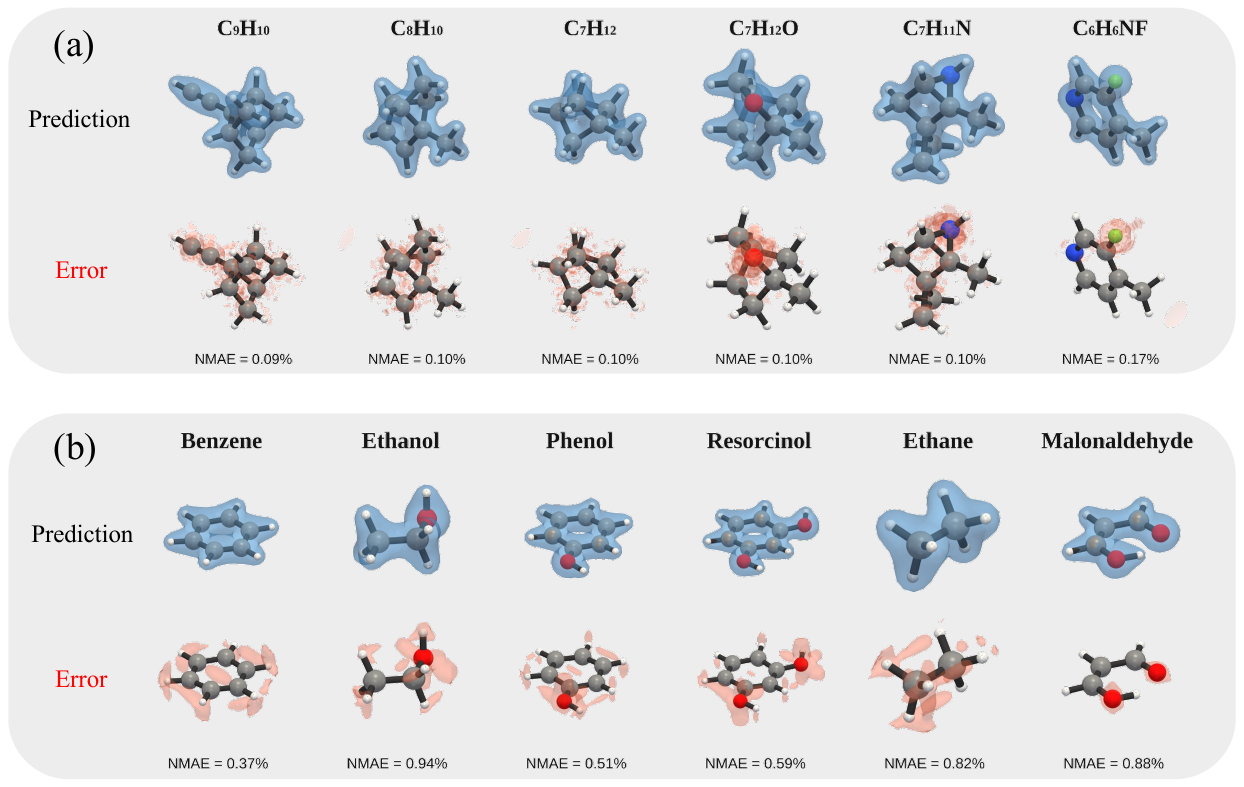}
\caption{Visualization of OrbFlow-predicted electron densities and prediction error.
(a)~Six representative QM9 test molecules.
(b)~Six representative MD test molecules.
\emph{Prediction} rows show the predicted density isosurface, rendered at
$\rho = 0.69$\,e/\AA$^3$ for QM9 and
$\rho = 1.12$\,e/\AA$^3$ for MD.
\emph{Error} rows show an isosurface of the absolute error
$|\rho^{\mathrm{pred}}-\rho^{\mathrm{DFT}}|$, at
$|\Delta\rho| = 1.2\times 10^{-3}$\,e/\AA$^3$ for QM9 and
$|\Delta\rho| = 1.1\times 10^{-2}$\,e/\AA$^3$ for MD.
Per-example NMAE is reported below for each molecule.}
\label{fig:qualitative}
\end{figure}

\subsection{DFT Acceleration and Functional Transferability}
\label{sec:exp-dft-acceleration}

A key downstream application of machine-learned charge density is to use it as improved initial guess for SCF calculations, thereby reducing the number of SCF iterations and accelerating DFT calculations. Standard Kohn-Sham DFT calculations typically initialize the SCF loop using a superposition of atomic densities (SAD), which may provide a relatively poor approximation to the converged electronic density and therefore require a larger number of computationally expensive SCF iterations. To improve the initialization, we instead use the real-space charge density $\rho_{\mathrm{pred}}$ predicted by OrbFlow as the initial density for the SCF calculation.

 The resulting SCF acceleration was evaluated using two complementary metrics. The Relative Iteration Count (RIC), defined as $\mathrm{RIC} = (N_{\mathrm{pred}} / N_{\mathrm{SAD}}) \times 100\%$, compares the number of SCF iterations required to reach the specified energy convergence criterion when initialized with the OrbFlow-predicted density and the conventional SAD density, respectively. Here, $N_{\mathrm{pred}}$ and $N_{\mathrm{SAD}}$ denote the corresponding numbers of SCF iterations. Similarly, the Relative Time Ratio (RTR), defined as $\mathrm{RTR} = (T_{\mathrm{pred}}/T_{\mathrm{SAD}}) \times 100\%$, compares the corresponding total wall-clock times, where $T_{\mathrm{pred}}$ and $T_{\mathrm{SAD}}$ denote the wall-clock times for calculations initialized with $\rho_{\mathrm{pred}}$ and SAD, respectively.

\textbf{Functional Transferability.} To assess whether OrbFlow learns transferable physical representations rather than memorizing artifacts specific to a single level of theory, we evaluate its zero-shot generalization across unseen exchange-correlation (XC) functionals using held-out QM9 molecules. All reference calculations are performed using VASP~\citep{dft_vasp1996prb} with a fixed plane-wave basis, with the complete computational settings provided in Appendix~\ref{app:vasp_settings}. Although trained exclusively at the GGA level, OrbFlow is evaluated across higher rungs of Jacob's ladder to hybrid functionals (B3LYP5~\cite{B3LYP5_Becke1993,B3LYP5_C.Lee1988}, PBE0~\cite{PBE0_Adamo1999,PBE0_Paier2005}), without any retraining. 

As shown in Table~\ref{tab:dft-transferability}, at the training-distribution functional PBE, OrbFlow reduces the required SCF iterations to 54.48\% of the SAD baseline and the wall-clock time to 89.51\%. The wall-clock saving is smaller because SCF iterations are only one component of a DFT run (Appendix~\ref{app:rtr_model}). This acceleration carries over to the two functionals never seen during training, both of which evaluate exact exchange rather than the semi-local exchange-correlation OrbFlow was trained on. PBE0 exhibits the strongest speedup on both metrics, reducing SCF iterations by 68.13\% and wall-clock time by 60.98\%, though its SAD baseline converges unusually slowly (43.02 iterations against 21.33 for PBE). B3LYP5 also benefits, with a larger wall-clock reduction than at the training-distribution functional (RTR 83.12\% against 89.51\%) though a higher relative iteration count (RIC 77.04\% against 54.48\%). Taken together, these results indicate that OrbFlow's predicted density is a genuinely informative initial guess across levels of theory rather than a narrow fit to PBE-specific dynamics.

\begin{table}[t]
\centering
\caption{SCF acceleration from OrbFlow initializations, evaluated on QM9 molecules. $N_{\mathrm{SAD}}$ and $N_{\mathrm{pred}}$ are SCF iterations to convergence from the SAD and OrbFlow initializations, respectively. Lower $\mathrm{RIC}\downarrow$ and lower $\mathrm{RTR}\downarrow$ both indicate greater acceleration. Iteration counts are means over the evaluated molecules and the $\pm$ values are standard deviations across those molecules. RIC and RTR are computed from the means.}
\label{tab:dft-transferability}
\small
\renewcommand{\arraystretch}{1.1}
\begin{tabular}{@{}l cc cc@{}}
\toprule
\textbf{Functional (Family: Name)} & \textbf{$N_{\mathrm{SAD}}$} & \textbf{$N_{\mathrm{pred}}$} & \textbf{RIC [\%]} $\downarrow$ & \textbf{RTR [\%]} $\downarrow$ \\
\midrule
\multicolumn{5}{@{}l}{\textit{Training-distribution functional}} \\
\hspace{0.6em}GGA: PBE & 21.33 $\pm$ 1.03 & 11.62 $\pm$ 1.65 & 54.48 & 89.51 \\
\midrule
\multicolumn{5}{@{}l}{\textit{Zero-shot transfer to unseen XC functionals}} \\
\hspace{0.6em}Hybrid GGA: B3LYP5 & 21.60 $\pm$ 1.73 & 16.64 $\pm$ 1.42 & 77.04 & 83.12 \\
\hspace{0.6em}Hybrid GGA: PBE0 & 43.02 $\pm$ 4.77 & 13.71 $\pm$ 2.47 & {31.87} & {39.02} \\
\bottomrule
\end{tabular}
\end{table}
\vspace{-4pt}

\subsection{Direct Physical Properties from Predicted Densities}
\label{sec:exp-direct-properties}

Beyond accelerating SCF, a predicted density also serves as a DFT surrogate. Density-derived observables follow directly from $\boldsymbol{\rho}^{\mathrm{pred}}$, with no self-consistent calculation. We evaluate the molecular dipole moment $\boldsymbol{\mu}$ and the traceless quadrupole tensor $Q_{ab}$ on 500 held-out QM9 molecules, against PBE references computed with VASP. Relative MAE is 1.0\% on the dipole magnitude and 1.1\% on the quadrupole norm, and 1.7\% and 2.5\% on their individual components (Table~\ref{tab:multipoles-from-density}, Appendix~\ref{app:multipole_def}). The accuracy is uniform across molecules rather than an average over well-predicted and poorly-predicted ones (Figure~\ref{fig:multipole-scatter}, Appendix~\ref{app:multipole_def}). Accuracy on both indicates that OrbFlow captures the density's shape at multiple spatial scales. One prediction therefore supports two downstream uses, an optimized SCF initial guess (Sec.~\ref{sec:exp-dft-acceleration}) and a stand-alone property surrogate.


\subsection{Inference Efficiency}
\label{sec:exp-efficiency}

Table~\ref{tab:efficiency-comparison} reports per-molecule inference latency. OrbFlow's end-to-end cost of $0.066$s is on par with ELECTRA ($0.061$s), and $1.6\times$ that of the single-shot GTO baseline SCDP, reflecting the two Euler steps of the rollout. The absolute overhead is a few hundredths of a second per molecule, negligible against a single SCF iteration, which typically costs on the order of seconds. Because the network cost scales close to linearly with $K$, OrbFlow's latency is predictable and directly controllable through the rollout budget. The full protocol is in Appendix~\ref{app:efficiency_details}.
\subsection{Ablation Studies}
\label{sec:ablations}

All ablations use a random 10\% subset of QM9, subsampled in both the training and test splits, held fixed across configurations, with three seeds and the protocol of Sec.~\ref{sec:exp-qm9}. Absolute NMAE is therefore higher than in Table~\ref{tab:qm9-categorized}, where models train on the full set. We ask three questions, whether the flow formulation contributes beyond the backbone, whether the two-phase curriculum is necessary, and how many Euler steps are needed.

\textbf{What the Flow Formulation Contributes.} Our flow formulation differs from direct regression in two ways. It starts from noise drawn from the prior rather than from a fixed input, and it is supervised on the output of its own multi-step rollout (\eqref{eq:Lint}) rather than one timestep at a time (\eqref{eq:Lep}). The four cells in Table~\ref{tab:ablation_factorial} run from direct regression, which does neither, to OrbFlow, which does both.

\begin{wraptable}[12]{r}{0.55\linewidth}
    \centering
    \vspace{-8pt}
    \caption{Ablations as a $2\times2$ over where the input state comes from and what the loss supervises. NMAE (\%), mean $\pm$ std over three seeds. All cells share the backbone, parameter count, basis, real-space loss, and step budget, and all use two network evaluations per prediction except direct regression, which uses one.}
    \label{tab:ablation_regression}\label{tab:ablation_factorial}
    \small
    \setlength{\tabcolsep}{3pt}
    \begin{tabular}{@{}lcc@{}}
        \toprule
        \textbf{Input state} & \textbf{Single-step loss} & \textbf{Rollout loss} \\
        \midrule
        Deterministic (from $\mathbf{0}$) & $0.280 \pm 0.003$ & $0.268 \pm 0.003$ \\
        Stochastic (noise prior) & $0.295 \pm 0.016$ & $\mathbf{0.240 \pm 0.005}$ \\
        \bottomrule
    \end{tabular}
\end{wraptable}

Neither ingredient works alone. Direct regression reaches $0.280 \pm 0.003\%$. Adding noise alone (Phase-1-only training) reaches $0.295 \pm 0.016\%$, slightly worse. Adding rollout alone, a deterministic two-step chain $\hat{\mathbf{c}} = g_\theta\big(\mathcal{G}, g_\theta(\mathcal{G}, \mathbf{0})\big)$, reaches $0.268 \pm 0.003\%$. Both together reach $0.240 \pm 0.005\%$. Starting from noise exposes the network to a wide range of input states, and rollout supervision makes that exposure count by training on the states inference actually visits, the model's own intermediates rather than points interpolated toward the target. All four cells share the backbone, parameter count, and $9\times10^4$-step budget, both rollout cells use two network evaluations, and the deterministic chain keeps a learned step-index embedding (Appendix~\ref{app:det-baseline}), so neither architecture, compute, nor step conditioning explains the gap. That gap is what few-step methods such as consistency~\citep{song2023consistency} and shortcut~\citep{frans2025one} models exist to close.

\begin{wraptable}[12]{r}{0.48\linewidth}
    \centering
    \vspace{-8pt}
    \caption{Curriculum variants on the same subset, with total training compute. The rollout-budget sweep and full compute accounting are in Appendix~\ref{app:curriculum_compute}.}
    \label{tab:ablation_curriculum}
    \small
    \setlength{\tabcolsep}{4pt}
    \begin{tabular}{@{}lcc@{}}
        \toprule
        \textbf{Configuration} & \textbf{NMAE (\%)} $\downarrow$ & \textbf{GPU-h} \\
        \midrule
        Phase 1 only & $0.295 \pm 0.016$ & 35 \\
        Phase 2 only & $0.250 \pm 0.027$ & 81 \\
        Joint single-stage & $0.241 \pm 0.002$ & 182 \\
        \textbf{Full curriculum (Ours)} & $\mathbf{0.240 \pm 0.005}$ & \textbf{74} \\
        \bottomrule
    \end{tabular}
\end{wraptable}

\textbf{Two-Phase Training Curriculum.} Both ingredients are needed, so the question becomes how to train them. Table~\ref{tab:ablation_curriculum} compares three routes to the same $K{=}2$ objective. Training directly at $K{=}2$ with no warm start reaches $0.250 \pm 0.027\%$, and optimizing the single-step and rollout losses jointly from random initialization reaches $0.241 \pm 0.002\%$, within seed variance of our $0.240 \pm 0.005\%$. Accuracy is therefore not what separates the rollout-trained variants. Cost is, 74 GPU-h for the curriculum against 81 without a warm start and 182 for joint optimization, and the curriculum reuses one Phase-1 checkpoint across rollout horizons while the alternatives retrain for each. Budgets here are matched in optimization steps rather than wall-clock, since a Phase-2 step evaluates the network twice.

\textbf{How Many Euler Steps.} We sweep the step count twice, over the Phase-2 rollout horizon $K$ used during training (Appendix~\ref{app:rollout_horizon}) and over the integration budget used at inference by the resulting $K{=}2$ model (Appendix~\ref{app:inference_budget}). In training, $K{=}1$ matches Phase-1-only (0.296\% against 0.295\%), since under the endpoint parameterization a single step from $t{=}0$ lands where the single-step loss already supervises, while $K{=}2$ and $K{=}3$ tie at a cost that grows with $K$. We therefore train at $K{=}2$. At inference, integrating more finely degrades accuracy, from $0.240\%$ at two steps to $0.539\%$ at four, because Phase 2 sharpens the network only at the states the two-step rollout visits. The train-matched budget is thus both the most accurate and the cheapest.

\section{Conclusion and Limitations}
\label{sec:conclusion}

We introduced OrbFlow, an $\mathrm{SE}(3)$-equivariant flow matching framework for electron density prediction. OrbFlow reaches state-of-the-art accuracy on QM9 and lowers error on every MD molecule against the strongest prior method sharing its basis, while matching a Gaussian-mixture model $55\times$ its size in accuracy and latency. Its predicted densities accelerate SCF across unseen exchange-correlation functionals and recover dipole and quadrupole moments without running SCF. The rollout adds a small inference overhead, and accuracy holds only at the trained horizon, so OrbFlow is best understood as a trained $K$-step map. Supervision across a range of horizons is a natural route to discretization-free behavior.

\section*{Acknowledgments}

This work was supported in part by the National Science Foundation under Grants IIS-2551110, MOMS-2331036, and CMMI-2226908; the Advanced Research Projects Agency for Health (ARPA-H) under grant 1AY1AX000053; the Texas A\&M University Division of Research Targeted Proposal Teams Funding Program; and the Texas A\&M Institute of Data Science Thematic Labs Program. First-principles calculations by C.W. were supported by the U.S. Department of Energy Office of Basic Energy Sciences under grant DE-SC0023866. We thank Jonas Elsborg and Arghya Bhowmik for their assistance in reproducing the results of the ELECTRA model. We are grateful to Shubham Parashar, Lianhao Zhou, and Wendi Yu for insightful discussions.

\bibliographystyle{refstyle}
\bibliography{ref}

\newpage
\appendix

\section{GTO Basis Set Details and Coefficient Projection}
\label{app:basis_details}
 
To train our machine learning model to predict the continuous real-space electron density $\rho(\mathbf{r})$, we project the continuous ground-truth density obtained from density functional theory (DFT) onto a localized, auxiliary basis set of Gaussian-Type Orbitals (GTOs). This section details the mathematical formulation of this basis set, the even-tempered scheme used to generate the radial exponents, and the regularized least-squares projection framework used to extract the target coefficient vectors.
 
\subsection{Even-Tempered Gaussian-Type Orbital (GTO) Basis Set}
\label{app:even-temper}
 
For an atom $i$ located at coordinate $\mathbf{R}_i$, the real-space density $\rho(\mathbf{r})$ is expanded using a linear combination of atom-centered GTO basis functions $\varphi_{\mu}(\mathbf{r} - \mathbf{R}_i)$. The basis functions are decomposed into radial and angular components,
\begin{equation}
    \varphi_{n, \ell, m}(\mathbf{r} - \mathbf{R}_i) = \mathcal{N}_{n, \ell} |\mathbf{r} - \mathbf{R}_i|^\ell \exp\left(-\alpha_{n, \ell} |\mathbf{r} - \mathbf{R}_i|^2\right) Y_{\ell}^{m}\left(\frac{\mathbf{r} - \mathbf{R}_i}{|\mathbf{r} - \mathbf{R}_i|}\right)
\end{equation}
where $Y_{\ell}^{m}$ represents the real spherical harmonics for angular momentum $\ell$ and magnetic quantum number $m \in [-\ell, \ell]$, and $\mathcal{N}_{n, \ell}$ is the normalization constant given by
\begin{equation}
    \mathcal{N}_{n, \ell} = \sqrt{\frac{2^{2\ell+7/2} \alpha_{n, \ell}^{\ell+3/2}}{(2\ell+1)!! \sqrt{\pi}}}
\end{equation}
where $(2\ell+1)!!$ denotes the double factorial of $2\ell+1$.
 
To systematically cover the radial coordinate space, the exponents $\alpha_{n, \ell}$ are parameterized using an even-tempered geometric progression. For each angular momentum channel $\ell$ up to $\ell_{\max}$, we construct $N_\ell$ radial shells. The exponents are determined by a base exponent $\alpha_{0, \ell}$ and a scaling factor $\beta_{\ell}$,
\begin{equation}
    \alpha_{n, \ell} = \alpha_{0, \ell} \cdot \beta_{\ell}^{n - 1}, \quad \text{for } n = 1, 2, \dots, N_\ell
\end{equation}
The hyperparameters $\{\alpha_{0, \ell}, \beta_{\ell}, N_\ell\}$ are optimized per element to span both the sharp, core-like densities near the nuclei (large $\alpha$) and the diffuse, bonding valence densities (small $\alpha$).

\subsection{Element-Specific Hyperparameter Configurations}
\label{app:element-para}
 
For each chemical element present in our dataset, Table~\ref{tab:gto_params} reports the maximum angular momentum $\ell_{\max}$, the number of radial shells $N_\ell$, the boundaries of the even-tempered progression ($\alpha_{\min} = \alpha_{0, \ell}$ and $\alpha_{\max} = \alpha_{0, \ell} \beta_\ell^{N_\ell-1}$), and the geometric progression scaling factor $\beta_\ell$, all instantiated from an uncontracted \texttt{def2-QZVPPD} parent basis with a shared even-tempered scale $\beta = 1.3$. QM9 uses the full table across $\mathrm{H, C, N, O, F}$. MD, whose six trajectories span only $\mathrm{H, C, O}$ (Sec.~\ref{sec:exp-md}), uses the corresponding subset. Bond-midpoint virtual nodes (Sec.~\ref{sec:representation}) are assigned the oxygen basis.
 
\begin{table}[htbp]
\centering
\caption{Even-tempered GTO parameters used for density projection (\texttt{def2-QZVPPD} parent basis, global $\beta=1.3$). $\alpha_{\max}=\alpha_{\min}\beta^{N_\ell-1}$. MD uses the H/C/O rows. QM9 uses the full table. Virtual nodes use O.}
\label{tab:gto_params}
\small
\setlength{\tabcolsep}{4pt}
\begin{tabular}{@{}lccccc@{}}
\toprule
\textbf{Element} & \textbf{Channel} & $N_\ell$ & $\alpha_{\min}$ & $\alpha_{\max}$ & $\beta$ \\
\midrule
H & $s$ ($\ell=0$) & 23 & 0.1606 & 51.59 & 1.3 \\
  & $p$ ($\ell=1$) & 18 & 0.1643 & 14.22 & 1.3 \\
  & $d$ ($\ell=2$) & 13 & 0.1681 & 3.917 & 1.3 \\
\midrule
C & $s$ ($\ell=0$) & 42 & 0.08996 & 4224 & 1.3 \\
  & $p$ ($\ell=1$) & 32 & 0.1169 & 398.2 & 1.3 \\
  & $d$ ($\ell=2$) & 27 & 0.1176 & 107.9 & 1.3 \\
  & $f$ ($\ell=3$) & 18 & 0.1529 & 13.22 & 1.3 \\
  & $g$ ($\ell=4$) & 13 & 0.1538 & 3.583 & 1.3 \\
\midrule
N & $s$ ($\ell=0$) & 42 & 0.1243 & 5837 & 1.3 \\
  & $p$ ($\ell=1$) & 32 & 0.1630 & 555.2 & 1.3 \\
  & $d$ ($\ell=2$) & 27 & 0.1641 & 150.5 & 1.3 \\
  & $f$ ($\ell=3$) & 18 & 0.2151 & 18.61 & 1.3 \\
  & $g$ ($\ell=4$) & 13 & 0.2165 & 5.044 & 1.3 \\
\midrule
O & $s$ ($\ell=0$) & 42 & 0.1393 & 6540 & 1.3 \\
  & $p$ ($\ell=1$) & 37 & 0.1102 & 1394 & 1.3 \\
  & $d$ ($\ell=2$) & 32 & 0.08720 & 297.0 & 1.3 \\
  & $f$ ($\ell=3$) & 23 & 0.1479 & 47.51 & 1.3 \\
  & $g$ ($\ell=4$) & 14 & 0.2509 & 7.600 & 1.3 \\
\midrule
F & $s$ ($\ell=0$) & 42 & 0.1753 & 8229 & 1.3 \\
  & $p$ ($\ell=1$) & 37 & 0.1280 & 1619 & 1.3 \\
  & $d$ ($\ell=2$) & 33 & 0.09354 & 414.2 & 1.3 \\
  & $f$ ($\ell=3$) & 23 & 0.1728 & 55.51 & 1.3 \\
  & $g$ ($\ell=4$) & 14 & 0.3194 & 9.673 & 1.3 \\
\bottomrule
\end{tabular}
\end{table}
 
 \subsection{L2 Least-Squares Coefficient Projection Mechanics}
\label{app:ridge}
 
Let $\varphi_p(\mathbf{r})$ represent a shorthand for the set of all active basis functions centered on all atoms (and virtual bond-midpoint nodes) in the system, indexed by a single flattened index $p = (i, n, \ell, m)$. Incorporating the scale factor $s > 0$ defined in the forward continuous density synthesis operator $\mathcal{O}(\mathbf{c}) = s\,\boldsymbol{\Phi}\,\mathbf{c}$, the continuous electron density $\rho(\mathbf{r})$ is approximated by the expansion
\begin{equation}
    \rho(\mathbf{r}) \approx \tilde{\rho}(\mathbf{r}) = s \sum_p c_p \varphi_p(\mathbf{r})
\end{equation}
 
To obtain the optimal target coefficients $c_p$ corresponding to the ground-truth DFT density $\rho_{\text{DFT}}(\mathbf{r})$, we minimize the $L^2$ error over the continuous space,
\begin{equation}
    E(\mathbf{c}) = \int_{\mathbb{R}^3} \left[ \rho_{\text{DFT}}(\mathbf{r}) - s \sum_p c_p \varphi_p(\mathbf{r}) \right]^2 d\mathbf{r}
\end{equation}
 
Setting the partial derivatives with respect to each coefficient to zero, $\frac{\partial E}{\partial c_p} = 0$, yields a system of linear equations, the normal equations,
\begin{equation}
    \sum_q S_{pq} c_q = y_p
\end{equation}
where $S_{pq}$ is the analytic overlap matrix between GTO basis functions,
\begin{equation}
    S_{pq} = \int_{\mathbb{R}^3} \varphi_p(\mathbf{r}) \varphi_q(\mathbf{r}) d\mathbf{r}
\end{equation}
and $y_p$ represents the normalized projection of the true DFT density onto the basis functions, factoring out the scale factor $s$,
\begin{equation}
    y_p = \frac{1}{s}\int_{\mathbb{R}^3} \rho_{\text{DFT}}(\mathbf{r}) \varphi_p(\mathbf{r}) d\mathbf{r}
\end{equation}
 
In practice, the ground-truth DFT reference density $\boldsymbol{\rho}^{\mathrm{lab}}$ is represented on a discrete numerical grid $\{(\mathbf{r}_k, w_k)\}$, where $\mathbf{r}_k$ is the coordinate of grid point $k$ and $w_k$ is the corresponding integration weight. Incorporating the diagonal spatial mask matrix $\mathbf{W} = \mathrm{diag}(w_1, w_2, \dots, w_P)$, the discrete least-squares objective formulated on the grid is expressed as
\begin{equation}
    \mathbf{c}^{\star} = \arg\min_{\mathbf{c}} \bigl\| \mathbf{W}^{1/2}\bigl(\boldsymbol{\Phi}\mathbf{c} - \mathbf{y}\bigr) \bigr\|_{2}^{2} + \lambda\|\mathbf{c}\|_{2}^{2} + \gamma\,\mathbf{c}^{\top}\! \mathrm{diag}(\boldsymbol{\Phi}^{\top}\boldsymbol{\Phi})\, \mathbf{c}
\end{equation}
where $\mathbf{y}=\boldsymbol{\rho}^{\mathrm{lab}}/s$, $\boldsymbol{\Phi}_{kj} = \varphi_j(\mathbf{r}_k)$ is the forward GTO evaluation design matrix, and $\lambda = 10^{-6}$ is a Tikhonov regularizer to prevent numerical instabilities caused by basis near-linear dependencies. The auxiliary parameter $\gamma$ regularizes the coefficients scaling with respect to the continuous basis functions' self-overlap.
 
Solving this regularized objective directly yields the target coefficients $\mathbf{c}^{\star}$,
\begin{equation}
    \mathbf{c}^{\star} = \left(\boldsymbol{\Phi}^{\top}\mathbf{W}\boldsymbol{\Phi} + \lambda \mathbf{I} + \gamma \, \mathrm{diag}(\boldsymbol{\Phi}^{\top}\boldsymbol{\Phi})\right)^{-1} \boldsymbol{\Phi}^{\top}\mathbf{W}\mathbf{y}
\end{equation}
During preprocessing we set $\lambda = 10^{-6}$ and $\gamma = 0$ (isotropic Tikhonov only). This is numerically stable while leaving a small, well-characterized representation residual.

\subsection{Empirical Fitting Floor}
\label{app:basis_floor}

Because OrbFlow predicts GTO coefficients rather than grid densities, learned NMAE cannot fall below the reconstruction error of $\mathbf{c}^{\star}$. We measure this floor by synthesizing $\tilde{\rho}=s\,\boldsymbol{\Phi}\,\mathbf{c}^{\star}$ from the stored ridge-fitted coefficients, with no neural network in the loop, and scoring NMAE on the same full probe grid used to evaluate the trained models.

\begin{table}[htbp]
\centering
\caption{GTO density fitting floor versus OrbFlow NMAE (\%). The floor is the NMAE of $\rho$ reconstructed from stored ridge ground-truth coefficients with no network. OrbFlow is mean $\pm$ std over three training seeds; the MD macro mean is unweighted over the six molecules.}
\label{tab:basis_floor}
\small
\renewcommand{\arraystretch}{1.1}
\begin{tabular}{@{}l r cc@{}}
\toprule
\textbf{Dataset} & \textbf{$n$} & \textbf{GT floor} $\downarrow$ & \textbf{OrbFlow} $\downarrow$ \\
\midrule
QM9 & 10{,}000 & $0.061$ & $0.152 \pm 0.001$ \\
\midrule
MD benzene & 500 & $0.399$ & $0.421 \pm 0.002$ \\
MD ethanol & 1{,}000 & $0.850$ & $1.078 \pm 0.023$ \\
MD phenol & 1{,}200 & $0.469$ & $0.575 \pm 0.008$ \\
MD resorcinol & 1{,}000 & $0.515$ & $0.663 \pm 0.014$ \\
MD ethane & 201 & $0.807$ & $0.871 \pm 0.008$ \\
MD malonaldehyde & 400 & $0.548$ & $1.177 \pm 0.012$ \\
MD macro mean & 6 & $0.598$ & $0.798$ \\
\bottomrule
\end{tabular}
\end{table}

On QM9 the fitting residual is $0.061\%$ while OrbFlow reaches $0.152\%$, so the remaining gap is due to coefficient prediction rather than incompleteness of the GTO span in Table~\ref{tab:gto_params}. Relative to SCDP + VO, which uses the same basis construction and reaches $0.178\%$, OrbFlow removes $22\%$ of the error remaining above this floor. On MD the gap varies by molecule: benzene and ethane lie close to the basis limit, whereas ethanol and malonaldehyde remain well above it, so compact rigid molecules are already well represented by this expansion while flexible oxygenates leave a larger residual for the model to close.
\subsection{Electron Count Conservation}
\label{app:electron_count}

Because a spherical Gaussian of angular momentum $\ell$ integrates to zero over $\mathbb{R}^3$ for every $\ell > 0$, only the $\ell{=}0$ channel of~\eqref{eq:rho-lcao} contributes to the total electron count,
\begin{equation}
  N_{\mathrm{e}}[\mathbf{c}]
  = \int_{\mathbb{R}^3} \rho(\mathbf{r})\,d\mathbf{r}
  = \sum_{i,j:\,\ell_{i,j}=0} c_{i,j,0}
    \int_{\mathbb{R}^3}
    \varphi_{\alpha_{i,j},0,0,\mathbf{R}_i}(\mathbf{r})\, d\mathbf{r},
  \label{eq:electron-count}
\end{equation}
so charge conservation is one linear condition on the $\ell{=}0$ coefficients rather than a property of the whole field.
We do not impose it during training, because a hard constraint would couple the equivariant readout across atoms.
When the predicted density is used as an SCF initial guess we renormalize it to the PAW valence count $N_{\mathrm{e}}^{\mathrm{val}}$.
On the official QM9 test set the closed-form count in~\eqref{eq:electron-count} for OrbFlow differs from $N_{\mathrm{e}}^{\mathrm{val}}$ by $0.014\,e$ on average, or $0.030\%$ of $N_{\mathrm{e}}^{\mathrm{val}}$.
The mean valence count on this set is $48.28\,e$, so the absolute miss is a few hundredths of an electron. The learned density already carries nearly the right charge, and the renormalization used for an SCF initial guess only has to correct a $0.030\%$ discrepancy.
\section{Proof of Prior Rotational Invariance}
\label{app:proof}
 
This section provides the full proof of the proposition stated in Sec.~\ref{sec:flow}, that the irrep-isotropic prior $p_0(\mathbf{c}_0)$ is invariant under any rotation $R \in \mathrm{SO}(3)$.
 
\begin{proof}
Because Wigner D-matrices $D^{(\ell)}(R)$ are unitary matrices (satisfying $D^{(\ell)}(R)^\top D^{(\ell)}(R) = \mathbf{I}$), the quadratic form in the Gaussian exponent of each subspace $\ell$ remains invariant under rotation,
\begin{equation}
  \|\mathbf{D}(R)\mathbf{c}_0^{(\ell)}\|^2 = \bigl(D^{(\ell)}(R)\mathbf{c}_0^{(\ell)}\bigr)^\top \bigl(D^{(\ell)}(R)\mathbf{c}_0^{(\ell)}\bigr) = \mathbf{c}_0^{(\ell)\top} D^{(\ell)}(R)^\top D^{(\ell)}(R) \mathbf{c}_0^{(\ell)} = \|\mathbf{c}_0^{(\ell)}\|^2.
\end{equation}
Thus, the density function is unchanged under any arbitrary 3D rotation, satisfying global $\mathrm{SO}(3)$-invariance.
\end{proof}

\section{Implementation Details}
\label{app:training_details}
 
This appendix consolidates the model architecture, target construction, and optimization procedure that support Sec.~\ref{sec:impl}. These settings are shared across QM9 and MD except where noted. Per-dataset hyperparameters and training configuration are given separately in Appendix~\ref{app:dataset-settings}.
 
\subsection{Network Architecture and Coefficient Parameterization}
\label{app:arch}
 
\textbf{Graph construction.} Each molecule is represented as a graph $\mathcal{G}$ whose nodes comprise nuclear coordinates together with bond-midpoint virtual orbitals (\texttt{vnode\_elem}${=}8$). Node features are initialized as $\ell{=}0$ spherical-tensor embeddings of atomic numbers, with higher-order channels set to zero. Edges are formed within a cutoff radius $r_{\mathrm{cut}}{=}6\,\text{\AA}$ and parameterized by Gaussian radial bases ($300$ functions, resolution $0.02\,\text{\AA}$) together with $\mathrm{SO}(3)$ edge frames used by equivariant attention.
 
\textbf{Equivariant backbone.} OrbFlow uses an EquiformerV3 encoder. QM9 employs $L{=}8$ blocks with $\ell_{\max}{=}6$, $m_{\max}{=}2$, and sphere/hidden/edge channel widths $128/256/128$. Each MD molecule uses a compact variant with $L{=}3$, $\ell_{\max}{=}3$, $m_{\max}{=}2$, and widths $96/192/96$ ($\sim$13M parameters). In both settings, attention uses $8$ heads with attention-hidden width $64$. No backbone dropout is applied.
 
\textbf{Flow conditioning and readout.} The scalar flow time $t$ is injected into the $\ell{=}0$ channel via a sinusoidal embedding. The continuous coefficient state $\mathbf{x}_t$ is introduced at each layer through an equivariant linear map, using EquiformerV3's SO(2)-convolution-based (eSCN-style) equivariant readout for coefficient entanglement~\citep{passaro2023reducing}, with no separate velocity MLP. This readout mechanism differs from using eSCN as a full alternative backbone. Latent tensors are mapped to per-atom GTO coefficients with multiplicities matched to the underlying basis. Coefficients are then scaled by $1/\sqrt{N_{\mathrm{orb}}}$ prior to density synthesis so that the training target remains consistent with the normalized coefficient representation.
 
\textbf{Basis and density synthesis.} We use an uncontracted \texttt{def2-QZVPPD} GTO basis with weighted even-tempered augmentation at ETB scale $\beta{=}1.3$ and orbital cutoff $5\,\text{\AA}$. Densities are synthesized in real space under the linear combination $\rho(\mathbf{r})=\sum_i c_i\,\phi_i(\mathbf{r})$.
 
\subsection{Flow-Matching Objective}
\label{app:objective}
 
Training follows conditional flow matching (CFM) with an \emph{endpoint} parameterization. The network predicts the terminal coefficient field, and the CFM velocity is implied by the linear bridge. 
The initial state is sampled from an irrep-structured Gaussian in normalized coefficient space, $c_0^{\mathrm{norm}} = \sigma\varepsilon + \mu$, where $\varepsilon$ is drawn via e3nn irrep noise and the scalar center $\mu$ is added only to the $\ell{=}0$ (scalar) coefficient dimension, following the standard convention for additive terms in equivariant architectures, since a nonzero mean is not rotation-invariant for any $\ell{>}0$ irrep. All $\ell{>}0$ channels remain zero-mean, consistent with the irrep-isotropic prior described in Sec.~\ref{sec:flow} and the invariance result of Proposition~1. On QM9 we use $(\mu,\sigma)=(2.5\times10^{-3},\,0.85)$ for the scalar channel. On MD, $(\mu,\sigma)$ is estimated once as the empirical mean and standard deviation of the scalar-channel training ground-truth coefficients for that molecule (Table~\ref{tab:md-priors}) and held fixed. Electron-count renormalization of the prior is disabled.

\begin{table}[htbp]
\centering
\caption{MD irrep-Gaussian prior parameters, estimated from each molecule's training-set ground-truth coefficients and held fixed during training.}
\label{tab:md-priors}
\small
\begin{tabular}{@{}lcc@{}}
\toprule
Molecule & $\mu$ & $\sigma$ \\
\midrule
Benzene        & $8.67\times10^{-3}$ & $1.82$ \\
Ethanol        & $5.64\times10^{-3}$ & $3.04$ \\
Phenol         & $5.04\times10^{-3}$ & $1.98$ \\
Resorcinol     & $4.72\times10^{-3}$ & $2.02$ \\
Ethane         & $8.44\times10^{-3}$ & $3.66$ \\
Malonaldehyde  & $1.79\times10^{-2}$ & $2.79$ \\
\bottomrule
\end{tabular}
\end{table}
 
\subsection{Optimization Protocol}
\label{app:optim}
 
All production runs use Adam~\citep{kingma2014adam} with $(\beta_1,\beta_2)=(0.9,0.999)$, $\varepsilon{=}10^{-8}$, base learning rate $1\times10^{-3}$, and zero weight decay. The learning rate follows a linear warm-up for the first $1{,}000$ steps (from a factor of $0.2$ to $1.0$), then a power decay
\begin{equation*}
  \eta_t \;=\; \eta_0\,\alpha^{t/\beta},
  \qquad \alpha{=}0.96,\quad \beta{=}4433.
\end{equation*}
We maintain an exponential moving average (EMA) of parameters with decay $0.995$, apply global-norm gradient clipping at $0.5$, and train in FP32 under PyTorch Lightning DDP. Checkpoints are selected by the lowest validation density error after Euler integration ($K{=}2$). All production jobs were run on NVIDIA H200 GPUs.
 
\section{Experiment Details}
\label{app:exp_details}
 
This appendix collects the protocols for the four supporting procedures referenced in the main text, the per-dataset training configuration and hyperparameters underlying every reported result, the DFT solver settings used for the SCF acceleration experiments (Sec.~\ref{sec:exp-dft-acceleration}), the inference latency measurement protocol underlying the computational efficiency comparison (Sec.~\ref{sec:exp-efficiency}), and the multipole moment definitions used to evaluate direct physical property prediction (Sec.~\ref{sec:exp-direct-properties}).
 
\subsection{Training Configuration}
\label{app:dataset-settings}

Table~\ref{tab:hparams} summarizes the primary architecture, coefficient representation, flow-matching, and optimization hyperparameters for both datasets. The training-time sampling range $t \sim \mathrm{Unif}[0.01, 0.99]$ excludes both endpoints of the bridge to avoid numerical instability during training, since the endpoint parameterization in Eq.~\ref{eq:v-from-endpoint} is singular at $t{=}1$ where the input state degenerates to $\mathbf{x}_t = \mathbf{c}^\star$, while at $t{=}0$ the input reduces to the pure noise prior and carries no target information.
 
\paragraph{QM9.} We train on the standard DeepDFT/SCDP partition ($123{,}835$ / $50$ / $10{,}000$ train/val/test) and report both the full test set and Split~2 (last $1{,}600$ test molecules) as in the main text. Training uses $P'{=}60{,}000$ randomly subsampled probe points per molecule ($120{,}000$ for validation/test), per-GPU batch size $4$ on $8$ GPUs (effective batch size $32$), Phase-1 length $T_1{=}7.5\times10^{4}$ steps, and Phase-2 continuation to a total of $T{=}4.5\times10^{5}$ steps.
 
\paragraph{MD.} We train a separate model for each of the six trajectories (benzene, ethanol, phenol, resorcinol, ethane, malonaldehyde), using the official per-molecule splits in Table~\ref{tab:md-splits}. Training, validation, and test all use $P'{=}125{,}000$ probes per configuration, per-GPU batch size $4$ on $4$ GPUs (effective batch size $16$), Phase-1 length $T_1{=}10^{4}$ steps, and Phase-2 continuation to a total of $T{=}1.1\times10^{5}$ steps.

\begin{table}[htbp]
\centering
\caption{Primary architecture and optimization hyperparameters. Shared entries are repeated for readability.}
\label{tab:hparams}
\small
\setlength{\tabcolsep}{4pt}
\renewcommand{\arraystretch}{1.12}
\begin{tabular}{@{}lcc@{}}
\toprule
\textbf{Hyperparameter} & \textbf{QM9} & \textbf{MD (per molecule)} \\
\midrule
\multicolumn{3}{@{}l}{\textit{EquiformerV3 backbone}} \\
\cmidrule(lr){1-3}
\# blocks $L$ & $8$ & $3$ \\
Sphere / hidden / edge channels & $128$ / $256$ / $128$ & $96$ / $192$ / $96$ \\
Attention hidden / \# heads & $64$ / $8$ & $64$ / $8$ \\
$\ell_{\max}$ ($m_{\max}$) & $6$ ($2$) & $3$ ($2$) \\
Cutoff $r_{\mathrm{cut}}$ / \# radial bases & $6\,\text{\AA}$ / $300$ & $6\,\text{\AA}$ / $300$ \\
Approx.\ \# parameters & $\sim$99M & $\sim$13M \\
\midrule
\multicolumn{3}{@{}l}{\textit{Coefficient / density representation}} \\
\cmidrule(lr){1-3}
GTO basis / ETB $\beta$ & \texttt{def2-QZVPPD} / $1.3$ & \texttt{def2-QZVPPD} / $1.3$ \\
GT projection ridge & $10^{-6}$ & $10^{-6}$ \\
Virtual orbitals & bond midpoints & bond midpoints \\
Probe subsample $P'$ (train / val) & $6{\times}10^{4}$ / $1.2{\times}10^{5}$ & $1.25{\times}10^{5}$ / $1.25{\times}10^{5}$ \\
\midrule
\multicolumn{3}{@{}l}{\textit{Flow matching}} \\
\cmidrule(lr){1-3}
Bridge / loss parameterization & CFM / endpoint & CFM / endpoint \\
Prior & irrep-Gaussian & irrep-Gaussian \\
Prior $(\mu,\sigma)$ & $(2.5{\times}10^{-3},\,0.85)$ & Table~\ref{tab:md-priors} \\
Train $t$ range / integrate grid & $[0.01,0.99]$ / uniform & $[0.01,0.99]$ / uniform \\
Euler steps $K$ (train / eval) & $2$ / $2$ & $2$ / $2$ \\
\midrule
\multicolumn{3}{@{}l}{\textit{Optimization}} \\
\cmidrule(lr){1-3}
Optimizer / base LR / weight decay & Adam / $10^{-3}$ / $0$ & Adam / $10^{-3}$ / $0$ \\
$(\beta_1,\beta_2)$ / warm-up steps & $(0.9,0.999)$ / $10^{3}$ & $(0.9,0.999)$ / $10^{3}$ \\
LR decay & power ($\alpha{=}0.96$, $\beta{=}4433$) & power ($\alpha{=}0.96$, $\beta{=}4433$) \\
EMA decay / grad clip / precision & $0.995$ / $0.5$ / FP32 & $0.995$ / $0.5$ / FP32 \\
Per-GPU batch / \# GPUs (eff.\ batch) & $4$ / $8$ ($32$) & $4$ / $4$ ($16$) \\
Phase 1 $\rightarrow$ total steps & $7.5{\times}10^{4}\rightarrow4.5{\times}10^{5}$ & $10^{4}\rightarrow1.1{\times}10^{5}$ \\
Random seeds & $0$, $42$, $123$ & $0$, $42$, $505$ \\
\bottomrule
\end{tabular}
\end{table}

\begin{table}[htbp]
\centering
\caption{MD trajectory split sizes used in our experiments.}
\label{tab:md-splits}
\small
\begin{tabular}{@{}lccc@{}}
\toprule
Molecule & Train & Validation & Test \\
\midrule
Benzene        & 1001 & 500  & 500  \\
Ethanol        & 1002 & 1000 & 1000 \\
Phenol         & 1001 & 1200 & 1200 \\
Resorcinol     & 1004 & 1000 & 1000 \\
Ethane         & 2000 & 201  & 201  \\
Malonaldehyde  & 2000 & 400  & 400  \\
\bottomrule
\end{tabular}
\end{table}

\subsection{DFT Solver Settings}
\label{app:vasp_settings}
 
This section provides full plane-wave DFT solver settings for the SCF acceleration experiments in Sec.~\ref{sec:exp-dft-acceleration}. For unseen XC functionals, the evaluation is performed on a fixed test set of 200 molecules. For training-distribution PBE, we expand the test set to 500 molecules because of the lower computational cost.
 
\textbf{DFT Calculations and Exchange-Correlation Functionals.} All calculations use VASP~\citep{dft_vasp1996prb} with the projector-augmented wave (PAW)~\citep{PAW_Bloch1994} method and the PBE exchange-correlation energy functional\citep{GGA_PBE_Perdew1996} (Sec.~\ref{sec:exp-dft-acceleration} additionally evaluates B3LYP5\citep{B3LYP5_Becke1993,B3LYP5_C.Lee1988} and PBE0~\citep{PBE0_Adamo1999,PBE0_Paier2005} for transferability). The kinetic energy for plane-wave basis set is fixed at $400$~eV, and the reciprocal-space sampling is restricted to the $\Gamma$-point only, consistent with the isolated-molecule (non-periodic) nature of the QM9 evaluation set. The blocked-Davidson algorithm was used for all SCF calculations except those employing B3LYP5, for which a conjugate gradient-based all-band simultaneous optimization scheme was adopted. This choice was made because blocked-Davidson algorithm exhibited poor SCF convergence for B3LYP5, with pronounced oscillations during electronic iterations that hindered convergence. The all-band optimization scheme provided more robust and stable convergence in these cases.
 
\textbf{Convergence Criterion.} SCF convergence is reached when the change in total energy between consecutive iterations falls below $10^{-5}$~eV, matching the convergence criterion used to report $\mathrm{RIC}$ and $\mathrm{RTR}$ in Sec.~\ref{sec:exp-dft-acceleration}.
 
\textbf{Cell Construction.} Molecules are placed in an orthorhombic simulation cell with a minimum of 2~\r{A} of vacuum padding on each side of the cell to prevent spurious periodic self-interaction, corresponding to a minimum distance of 4~\r{A} between the atoms from periodic images, same as the original charge density data set~\citep{jorgensen2022deepDFT}. The real-space FFT grid resolution is fixed at approximately 0.15~\r{A} per voxel along each cartesian direction across all systems, ensuring consistent grid density independent of molecule size.
 
\textbf{Initialization Protocol.} For each test molecule, we run two matched SCF calculations that differ only in initial charge density, an atomic superposition baseline (SAD initialization) and initialization from the OrbFlow-predicted density, renormalized to conserve total valence electron count and written directly as the solver's input charge density file. All other solver settings, including $k$-point sampling, energy cutoff, and convergence threshold, are held identical between the two runs to ensure that any difference in required SCF iterations is attributable solely to the quality of the initial density guess.

\textbf{Reported Metrics.} For each molecule, we record the number of SCF iterations required to reach convergence from each initialization and compute the Relative Iteration Count $\mathrm{RIC} = N_{\mathrm{pred}} / N_{\mathrm{SAD}} \times 100\%$. The Relative Time Ratio additionally accounts for OrbFlow's own inference latency, ensuring that reported acceleration reflects genuine end-to-end speedup rather than SCF iteration count alone.

\subsection{Why Iteration Savings Only Partially Translate to Wall-Clock Savings}
\label{app:rtr_model}

A DFT run costs more than its SCF iterations. Basis and PAW projector construction, FFT grid setup, symmetry analysis, and file I/O are incurred once per calculation and are unaffected by the quality of the initial density, so a reduction in iteration count can never reduce total runtime proportionally. Writing the total time of a run as a fixed overhead plus a per-iteration cost, $T = T_{\mathrm{fix}} + N\,t_{\mathrm{iter}}$, and letting $f = T_{\mathrm{fix}}/T_{\mathrm{SAD}}$ denote the non-iterative share of a baseline run, the two reported metrics are related by
\begin{equation}
  \mathrm{RTR} = f + (1-f)\,\mathrm{RIC}.
  \label{eq:rtr-ric}
\end{equation}
Solving~\eqref{eq:rtr-ric} for $f$ at each functional in Table~\ref{tab:dft-transferability} gives $f = 77.0\%$ for PBE, $26.5\%$ for B3LYP5, and $10.5\%$ for PBE0. Substituting these back reproduces all three reported RTR values exactly. PBE iterations on molecules of this size are cheap, so fixed setup dominates and even a 46\% reduction in iterations buys only about 10\% of wall-clock time. Hybrid functionals evaluate exact exchange at every iteration, which makes the iterative part far more expensive, shrinks $f$, and brings RTR close to RIC. Since $\mathrm{RTR} \to \mathrm{RIC}$ as $f \to 0$, the wall-clock benefit of a learned initial density grows with the cost of the underlying method, and the PBE row understates what the same iteration savings are worth at higher levels of theory or on larger systems. Note that $T_{\mathrm{pred}}$ includes OrbFlow's own inference latency, so these values of $f$ are mild overestimates.

\subsection{Multipole Moments}
\label{app:multipole_def}
 
This section gives the full definitions of the dipole and quadrupole moments evaluated in Sec.~\ref{sec:exp-direct-properties}, and supplements the aggregate accuracy reported there (Table~\ref{tab:multipoles-from-density}) with a per-molecule view.
 
Because the PAW reference densities contain only valence electrons, we pair them with the valence (pseudo-ion) charges $Z_A^{\mathrm{val}}$, so that the total system is charge neutral. The molecular electric dipole moment is then obtained via
\begin{equation}
  \boldsymbol{\mu}
  =
  \sum_{A} Z_A^{\mathrm{val}} \mathbf{R}_A
  -
  \int \mathbf{r}\, \boldsymbol{\rho}(\mathbf{r})\,d\mathbf{r},
  \label{eq:dipole}
\end{equation}
and the traceless electric quadrupole moment tensor is obtained analogously via
\begin{equation}
  Q_{ab}
  =
  \sum_{A} Z_A^{\mathrm{val}} \left(3 R_{A,a} R_{A,b} - |\mathbf{R}_A|^2 \delta_{ab}\right)
  -
  \int \left(3 r_a r_b - |\mathbf{r}|^2 \delta_{ab}\right) \boldsymbol{\rho}(\mathbf{r})\,d\mathbf{r},
  \label{eq:quadrupole}
\end{equation}
where $a, b \in \{x, y, z\}$. Both $\boldsymbol{\mu}$ and $Q_{ab}$ are therefore fully determined by $\boldsymbol{\rho}$ once the nuclear geometry is fixed.
 
\begin{table}[htbp]
\centering
\caption{Dipole and quadrupole MAE (OrbFlow vs.\ DFT) on held-out QM9. Relative MAE is normalized by the mean DFT-reference magnitude on the same test set.}
\label{tab:multipoles-from-density}
\small
\renewcommand{\arraystretch}{1.15}
\begin{tabular}{@{}l cc@{}}
\toprule
\textbf{Quantity} & \textbf{MAE} & \textbf{Rel.\ MAE [\%]} \\
\midrule
Dipole, $\mu_x, \mu_y, \mu_z$ & $0.026$\,D & $1.7$ \\
Dipole magnitude, $|\boldsymbol{\mu}|$ & $0.031$\,D & $1.0$ \\
Quadrupole, $Q_{ab}$ & $0.044$\,B & $2.5$ \\
Quadrupole norm, $\|Q\|_F$ & $0.066$\,B & $1.1$ \\
\bottomrule
\end{tabular}
\end{table}
 
Figure~\ref{fig:multipole-scatter} shows the per-molecule prediction accuracy for both moments across the held-out QM9 test set, confirming that the aggregate accuracy in Table~\ref{tab:multipoles-from-density} is uniform across molecules rather than driven by a favorable average over a subset of poorly-predicted outliers.
 
\begin{figure}[htbp]
\centering
\includegraphics[width=\linewidth]{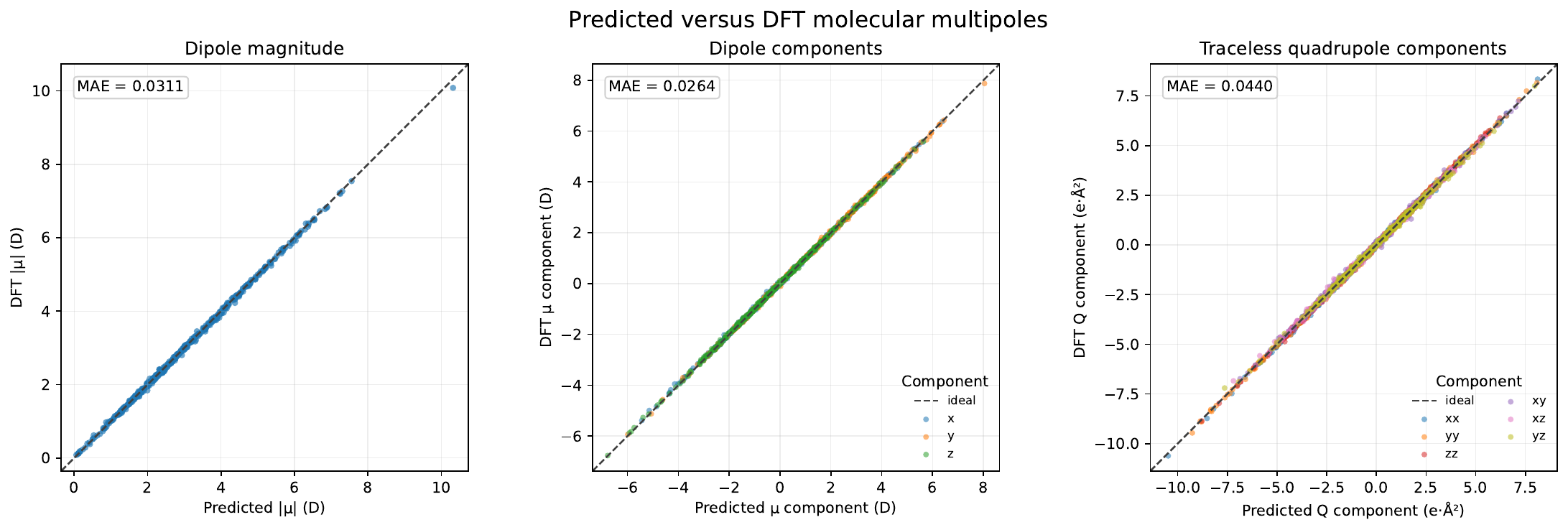}
\caption{Predicted (OrbFlow) vs.\ DFT-reference molecular multipoles on held-out QM9 (no SCF), computed via \eqref{eq:dipole} and \eqref{eq:quadrupole}. Left: dipole magnitude $|\boldsymbol{\mu}|$. Center: individual dipole components $\mu_x, \mu_y, \mu_z$. Right: individual traceless quadrupole components $Q_{ab}$. The dashed line is $y=x$. Points close to this line indicate accurate direct property prediction, and the reported MAE in each panel matches the corresponding row of Table~\ref{tab:multipoles-from-density}.}
\label{fig:multipole-scatter}
\end{figure}

\subsection{Efficiency Measurement Protocol}
\label{app:efficiency_details}

This section details the measurement protocol underlying the inference latency comparison in Sec.~\ref{sec:exp-efficiency}.

\textbf{Protocol.} All timings are measured on a single NVIDIA H200 GPU with batch size 1. We report the mean over the six MD test molecules (Sec.~\ref{sec:exp-md}) and over three independent timing repeats of the same checkpoints, where each repeat covers the full test split after a short CUDA warm-up of three graphs. The network cost is the wall-clock time to produce final GTO coefficients from the molecular graph, which for OrbFlow includes the full $K$-step Euler rollout in \eqref{eq:generalized_euler}. The density synthesis cost is the cost of the operator $\mathcal{O}(\cdot)$ in \eqref{eq:synth} mapping coefficients to the real-space grid. OrbFlow evaluates this operator with a fused Triton reduction of the same GTO contraction used by SCDP.

\begin{table}[htbp]
\centering
\caption{Per-molecule inference latency on the MD benchmark, averaged across the six molecules and three independent H200 timing repeats (mean $\pm$ std).}
\label{tab:efficiency-comparison}
\small
\renewcommand{\arraystretch}{1.1}
\begin{tabular}{@{}lccc@{}}
\toprule
\textbf{Model} & \textbf{Network Cost [s]} & \textbf{Density Synthesis Cost [s]} & \textbf{Total [s]} \\
\midrule
ELECTRA & $0.058 \pm 0.003$ & $0.0025 \pm 0.0002$ & $0.061 \pm 0.003$ \\
SCDP & $0.0265 \pm 0.0004$ & $0.0150 \pm 0.0001$ & $0.0414 \pm 0.0005$ \\
OrbFlow (ours) & $0.054 \pm 0.001$ & $0.0126 \pm 0.0001$ & ${0.066 \pm 0.001}$ \\
\bottomrule
\end{tabular}
\end{table}

\textbf{Scaling with $K$.} Two patterns in Table~\ref{tab:efficiency-comparison} are worth expanding on. First, OrbFlow's network cost scales close to linearly with $K$. Its $0.054$s cost at two integration steps is roughly twice SCDP's single-shot cost of $0.026$s, consistent with a per-step cost close to a single regression forward pass, around $0.027$s per step. OrbFlow's inference cost is therefore predictable and directly controllable through the choice of $K$ (Sec.~\ref{sec:ablations}) rather than compounding unpredictably. Second, the density synthesis cost differs substantially between ELECTRA ($0.003$s) and the two GTO-based models, SCDP ($0.015$s) and OrbFlow ($0.013$s). This gap reflects the underlying density representation, a 3D Gaussian mixture evaluated with a fused KeOps reduction for ELECTRA against the auxiliary GTO basis expansion shared by SCDP and OrbFlow, rather than the flow matching formulation itself.

\section{Extended Ablation Results}
\label{app:additional_ablations}
 
This appendix expands on the ablations summarized in Sec.~\ref{sec:ablations}, covering the deterministic two-step baseline, training compute cost, the training-time rollout horizon, and the inference-time integration budget.
 
\subsection{Deterministic Two-Step Baseline}
\label{app:det-baseline}
 
The deterministic baseline of Sec.~\ref{sec:ablations} uses the same EquiformerV3 configuration as the QM9 models (Table~\ref{tab:hparams}), including the equivariant linear map that injects the coefficient state at each layer, and is trained and evaluated on the same 10\% QM9 ablation subset as the other ablation rows. The first evaluation receives an all-zero coefficient state and the second receives the output of the first. No noise is sampled, and the continuous time embedding is replaced by a learned two-valued step-index embedding, so that the baseline retains the ability to behave differently on its first and second evaluation. This isolates the stochastic prior and the bridge interpolation, the components specific to the flow formulation, rather than conflating them with the loss of step conditioning. The chain is trained end to end with the real-space MAE of~\eqref{eq:Lint} under the same step budget as the corresponding OrbFlow run, with identical optimizer, learning-rate schedule, EMA, probe subsampling, and checkpoint selection. 
 
\subsection{Curriculum Variants and Training Compute}
\label{app:curriculum_compute}
 
Table~\ref{tab:ablation_curriculum_full} collects every configuration discussed in Sec.~\ref{sec:ablations} together with its total training cost. The curriculum is compute-efficient in practice because the Phase-1 warm start is shared across any number of downstream rollout horizons. Of the 74 GPU-h total cost for the full $K{=}2$ curriculum, only 39 GPU-h come from the Phase-2 fine-tuning stage itself, with the remaining 35 GPU-h of Phase-1 cost shared with $K{=}3$. That incremental fine-tuning cost is roughly half that of training the same $K{=}2$ objective from scratch (81 GPU-h), and the saving widens once the shared Phase-1 cost is amortized across rollout horizons.
 
\begin{table}[htbp]
    \centering
    \caption{All ablation configurations on the 10\% QM9 subset, with total training compute. NMAE is mean $\pm$ std over three seeds. Rows without a compute entry reuse the shared Phase-1 checkpoint and are reported for accuracy only. }
    \label{tab:ablation_curriculum_full}
    \small
    \begin{tabular}{@{}lcc@{}}
        \toprule
        \textbf{Configuration} & \textbf{NMAE (\%)} $\downarrow$ & \textbf{GPU-h} \\
        \midrule
        Direct regression (single pass) & $0.280 \pm 0.003$ & -- \\
        Deterministic two-step & $ 0.268\pm 0.003$  & -- \\
        \midrule
        Phase 1 only & $0.295 \pm 0.016$ & 35 \\
        Phase 2 only, $K{=}2$ (no warm start) & $0.250 \pm 0.027$ & 81 \\
        Joint single-stage, $K{=}2$ & $0.241 \pm 0.002$ & 182 \\
        \textbf{Full curriculum, $K{=}2$ (Ours)} & $\mathbf{0.240 \pm 0.005}$ & \textbf{74} \\
        \midrule
        Full curriculum, $K{=}1$ & $0.296 \pm 0.005$ & -- \\
        Full curriculum, $K{=}3$ & $0.240 \pm 0.002$ & --  \\
        \bottomrule
    \end{tabular}
\end{table}
 
\subsection{Training-Time Rollout Horizon}
\label{app:rollout_horizon}
 
As shown in Table~\ref{tab:ablation_curriculum_full}, holding the Phase-1 warm start fixed, we vary the rollout budget $K$ used during Phase 2. $K{=}1$ reaches $0.296 \pm 0.005\%$, statistically indistinguishable from Phase-1-only training at $0.295 \pm 0.016\%$. This follows from the endpoint parameterization. A single Euler step starting at $t_0 = 0$ evaluates $f_\theta$ at exactly the point $\mathcal{L}_{\mathrm{stage1}}$ already supervises, so fine-tuning restricted to this boundary case adds essentially no signal beyond Phase 1.
 
$K{=}2$ and $K{=}3$ converge to the same accuracy, $0.240 \pm 0.005\%$ and $0.240 \pm 0.002\%$. Total training compute nevertheless grows monotonically with $K$, since each additional Euler step adds one more sequential network evaluation per training example during Phase 2, together with the corresponding backward pass. We therefore adopt $K{=}2$ as the smallest budget at which rollout supervision is informative and past which additional steps yield no measurable benefit at increasing cost.
 
\subsection{Inference-Time Integration Budget}
\label{app:inference_budget}
 
Table~\ref{tab:ablation_nfe} reports accuracy and relative cost as a function of the number of Euler steps used at inference for our $K{=}2$-trained model, extending to budgets never seen during training. Accuracy is best exactly at the train-matched budget of two steps. Using more steps than trained does not recover accuracy but degrades it, from $0.240\%$ to $0.539\%$ at four steps, after which the error plateaus in the $0.52$--$0.54\%$ range while relative cost grows to $3.98\times$ at sixteen steps. The train-matched setting is therefore both the accuracy-optimal and the most compute-efficient inference choice.
 
\begin{table}[htbp]
    \centering
    \caption{QM9 NMAE as a function of the number of Euler integration steps used at inference, for a model trained with $K{=}2$. NMAE is mean $\pm$ std over three training seeds (best-val checkpoint). Relative wall-clock inference cost is normalized to the train-matched budget ($t_K/t_2$), averaged over seeds.}
    \label{tab:ablation_nfe}
    \small
    \begin{tabular}{lcccc}
        \toprule
        \textbf{Inference Steps} & \textbf{2 (train-matched)} & \textbf{4} & \textbf{8} & \textbf{16} \\
        \midrule
        \textbf{NMAE (\%)} $\downarrow$ & $\mathbf{0.240 \pm 0.005}$ & $0.539 \pm 0.033$ & $0.519 \pm 0.044$ & $0.516 \pm 0.053$ \\
        \textbf{Relative Inference Cost} & $1\times$ & $1.36\times$ & $2.18\times$ & $3.98\times$ \\
        \bottomrule
    \end{tabular}
\end{table}
 
This lack of generalization beyond the trained horizon follows directly from how Phase 2 supervises the model, and is worth making explicit. Under the endpoint parameterization of~\eqref{eq:v-from-endpoint}, $\mathbf{v}_\theta$ is derived algebraically from a predicted clean endpoint $f_\theta(\mathcal{G}, \mathbf{x}_t, t)$, and Phase 2 supervises that endpoint prediction only at the two discretization points $t \in \{0, 0.5\}$ visited by the $K{=}2$ rollout. The network therefore receives no training signal at the intermediate timesteps an integrator with $K > 2$ would query, nor at the coarser single interpolation step $K{=}1$ would require. The plateau in Table~\ref{tab:ablation_nfe} is consistent with this account. Rather than learning a smooth field valid under arbitrary discretizations, Phase 2 sharpens $f_\theta$ into an accurate two-point discrete map, and additional Euler steps subdivide the trajectory into intervals the endpoint predictor was never asked to be accurate on. Removing this restriction, for instance by conditioning on the step size as in shortcut models~\citep{frans2025one}, is the clearest direction for future work.

\end{document}